\patchcmd{\SOUL@ulunderline}{\dimen@}{\SOUL@dimen}{}{}
\patchcmd{\SOUL@ulunderline}{\dimen@}{\SOUL@dimen}{}{}
\patchcmd{\SOUL@ulunderline}{\dimen@}{\SOUL@dimen}{}{}
\newdimen\SOUL@dimen
\newcommand{\Desc}[2]{\State \makebox[5em][l]{#1}#2}
\let\oldtheequation\theequation
\renewcommand\tagform@[1]{\maketag@@@{\ignorespaces#1\unskip\@@italiccorr}}
\renewcommand\theequation{(\oldtheequation)}
\newtheorem{example}{Example}
\newtheorem{definition}{Definition}
\newtheorem*{problem}{Central Problem}
\newcommand{\swap}{\operatorname{SWAP}}
\newcommand{\cnot}{\operatorname{CNOT}}
\newcommand{\cand}{\operatorname{Cand}}
\newcommand{\sopt}{s_{\text{opt}}}
\newcommand{\cov}{\operatorname{Cov}}
\definecolor{myyellow}{RGB}{255, 253, 0}
\definecolor{myorange}{RGB}{238, 135, 51}
\definecolor{myblue}{RGB}{47, 112, 137}
\newcounter{relctr} %
\everydisplay\expandafter{\the\everydisplay\setcounter{relctr}{0}} %
\begin{document}

\title{On Optimal Subarchitectures for Quantum Circuit Mapping}
\author{Tom Peham}
\orcid{0000-0003-3434-7881}
\affiliation{
  \department{Chair for Design Automation}
  \institution{Technical University of Munich}
  \city{Munich}
  \country{Germany}
}
\email{tom.peham@tum.de}
\author{Lukas Burgholzer}
\orcid{0000-0003-4699-1316}
\affiliation{%
    \department{Institute for Integrated Circuits}
  \institution{Johannes Kepler University}
  \city{Linz}
  \country{Austria}
}
\email{lukas.burgholzer@jku.at}
\author{Robert Wille}
\orcid{0000-0002-4993-7860}
\affiliation{%
  \institution{Technical University of Munich}
  \department{Chair for Design Automation}
  \city{Munich}
  \country{Germany}
}
\affiliation{%
  \institution{Software Competence Center Hagenberg GmbH}
  \city{Hagenberg}
  \country{Austria}
}
\email{robert.wille@tum.de}

	\begin{abstract}
          Compiling a high-level quantum circuit down to a low-level description that can be executed on
          state-of-the-art quantum computers is a crucial part of the software stack for quantum computing. One step in
          compiling a quantum circuit to some device is quantum circuit mapping, where the circuit is transformed such
          that it complies with the architecture’s limited qubit connectivity. Because the search space in quantum
          circuit mapping grows exponentially in the number of qubits, it is desirable to consider as few of the
          device’s physical qubits as possible in the process. Previous work conjectured that it suffices to consider
          only subarchitectures of a quantum  computer composed of as many qubits as used in the circuit. In this work,
          we refute this conjecture and establish criteria for judging whether considering  larger parts of the
          architecture might yield better solutions to the mapping problem. We show
          that determining subarchitectures that are of minimal size, i.e., from which no physical qubit can be removed
          without losing the optimal mapping solution for some quantum circuit, is a very hard problem. Based on a
          relaxation of the criteria for optimality, we introduce a relaxed consideration that still maintains
          optimality  for practically relevant quantum circuits. Eventually, this results in two methods for computing
          near-optimal sets of subarchitectures---providing the basis for \emph{efficient} quantum circuit mapping
          solutions. 
          We demonstrate the benefits of this novel method for state-of-the-art quantum computers by IBM, Google and Rigetti.
	\end{abstract}

        \maketitle
                \vspace*{-1mm}
\section{Introduction}
                \vspace*{-1mm}

        Quantum computing~\cite{nielsenQuantumComputationQuantum2010} is an emerging technology where
        computations are governed by quantum-mechanical principles. 
        Despite its relative infancy, even currently available quantum computers impose many challenges on
        quantum algorithm designers---increasing the need for design automation methods in the realm of quantum
        computing. 
        One critical challenge (for devices based on superconducting qubits~\cite{devoretSuperconductingCircuitsQuantum2013}) arises from the limited connectivity of the physical qubits on these quantum computers.
        Instead of allowing multi-qubit gates between arbitrary qubits, a \emph{coupling map} dictates which pairs of qubits may interact with each other over the course of a computation.
        Any quantum circuit executed on an actual device needs to conform to those restrictions.
		Thus, similar to classical computing, high-level quantum algorithms need to be \emph{compiled} to the
                target architecture---encompassing  many individual steps, such as synthesis, qubit
                allocation/placement, qubit routing, optimization, and
                scheduling~\cite{saeediSynthesisQuantumCircuits2011,siraichiQubitAllocation2018,zhuExactQubitAllocation2020,cowtanQubitRoutingProblem2019,namAutomatedOptimizationLarge2018,kissingerReducingTcountZXcalculus2020,muraliNoiseadaptiveCompilerMappings2019,burgholzerLimitingSearchSpace2022,willeMappingQuantumCircuits2019,zulehnerEfficientMethodologyMapping2019,zulehnerCompilingSUQuantum2019,hillmichExploitingQuantumTeleportation2021}.
		
		In the following, we focus on the steps dictated by the limited connectivity of the devices, i.e., qubit allocation/placement and routing, which is commonly referred to as \emph{quantum circuit mapping}.
		Mapping a quantum circuit first requires \emph{allocating} the circuit's logical qubits on the device's physical qubits, i.e., identifying a subarchitecture of the whole device where the circuit shall be executed.
		Then, the logical qubits need to be \emph{routed} on the physical qubits so that any operation in the circuit is executed on qubits that are connected on the device's coupling map.	
		This is commonly conducted by inserting $\swap$ gates into the circuit which allow dynamically changing the logical-to-physical qubit assignment~\cite{saeediSynthesisQuantumCircuits2011,siraichiQubitAllocation2018,zhuExactQubitAllocation2020,cowtanQubitRoutingProblem2019,muraliNoiseadaptiveCompilerMappings2019,burgholzerLimitingSearchSpace2022,willeMappingQuantumCircuits2019,zulehnerEfficientMethodologyMapping2019,zulehnerCompilingSUQuantum2019,hillmichExploitingQuantumTeleportation2021}.

        In order to ensure reliable execution of the resulting circuit, it is crucial to keep the overhead introduced through routing as small as possible.
		Unfortunately, determining optimal mappings of quantum circuits is an NP-hard
                problem~\cite{boteaComplexityQuantumCircuit2018, siraichiQubitAllocation2018}---mainly due to the
                involved search space growing exponentially in the number of considered qubits.
		This is exacerbated by the fact that quantum computers are not available in arbitrary sizes. 
		At the time of writing, e.g., IBM hosts systems using $1,5,7,27,65,127,$ and $433$ qubits.
		This means that, in many cases, significantly more qubits are available on an architecture than are
                strictly needed for mapping a quantum circuit. This is going to become even more problematic as the
                size-gap between newer generations of quantum computers increases. 
		For example, IBM plans to provide a quantum computer with $1121$ qubits in
                2023~\cite{gambettaIBMRoadmapScaling2020}. 
		A quick solution for this problem is to just pick a subarchitecture of the device that contains as many qubits as needed.
		While this keeps the search space for qubit routing as small as possible, it could be possible that incorporating more qubits than needed---effectively increasing the search space---allows for solutions requiring less overhead.
		Experimental results on optimal quantum circuit mapping conjectured that this is not the case~\cite{burgholzerLimitingSearchSpace2022,willeMappingQuantumCircuits2019}---suggesting that optimality is preserved by restricting the search space to the bare minimum.

		In this work, we answer this question by showing that considering larger
                subarchitectures can indeed yield better mappings. 
		Naturally, the next step is to ask which subarchitectures \emph{should} be considered during mapping so
                that no essential parts of the search space are cut off. Unsurprisingly, determining subarchitectures
                that are as small and that contain \emph{all} optimal mapping solutions for any quantum circuit of a
                specified size turns out to be a very hard problem. Luckily, the general notion of such \emph{optimal
                  subarchitectures} is hardly needed for mapping practical quantum circuits.
                This work defines and analyzes minimal sets of subarchitectures that allow for
                optimal mapping solutions for all but the most esoteric quantum circuits with a certain number of qubits,
                while drastically reducing the search space in many cases. To this end, the subarchitectures are ranked
                according to their coverage of optimality, i.e., the number of quantum circuits that can be optimally
                mapped to a given subarchitecture.  
		
		The resulting software is integrated into the open-source tool QMAP (available at \url{https://github.com/cda-tum/qmap}), which is part of the \emph{Munich Quantum Toolkit} (MQT), and can be used to compute various sets of
                subarchitectures with a trade-off between complexity and coverage of optimality.
        For convenience, it includes a pre-computed library of subarchitectures for common quantum computing devices. 
        Our results considering state-of-the-art quantum computers demonstrate the benefits of using the generated sets as the basis for quantum circuit mappers---allowing to significantly reduce the overall complexity of the search space%
		---while still ensuring that essential parts of the
                search space are covered for most quantum circuits.

        The rest of this paper is structured as follows. \autoref{sec:background} provides the necessary background
        in graph theory and quantum circuit mapping required to present the ideas in this work. After
        \autoref{sec:considered-problem} introduces the
        problem this work aims to address, \autoref{sec:better-mappings} defines two
        criteria that allow for the existence of better mapping solutions on larger subarchitectures and gives a
        complete characterization of optimal subarchitectures. Two algorithms for
        constructing \mbox{near-optimal} subarchitectures are presented in
        \autoref{sec:methods}. \autoref{sec:experiments} demonstrates the benefits of this novel method for existing
        quantum computing architectures. Finally, \autoref{sec:conclusion} concludes this paper.

        \vspace*{-2mm}
	\section{Background}\label{sec:background}
        The connectivity of qubits in modern NISQ~\cite{preskillQuantumComputingNISQ2018} devices is described via the
        architecture's coupling graph. Therefore, a lot of jargon related to graph theory will be used in the
        discussions on optimal \mbox{subarchitectures}. This section briefly introduces the relevant graph-theoretic
        concepts as well as the basics of quantum circuit mapping. For a comprehensive introduction to graph theory
        the reader is referred to any standard textbook such as~\cite{bondyGraphTheory}.

        \vspace*{-2mm}
              \subsection{Graph Theory}\label{sec:graph-theory}

        A \emph{graph} $A = (V, E)$ is comprised of a set of \emph{vertices} $V$ and a set of \emph{edges} $E \subseteq V
        \times V$ connecting vertices. The size of a graph is the number of vertices in the graph denoted $|A| =
        |V|$. $A$ is called \emph{undirected} if the pair $(v, w) \in E$ is unordered, otherwise it is called directed.
        
        Two graphs $A = (V_A, E_A)$, $A' = (V_{A'}, E_{A'})$ are \emph{isomorphic} (denoted $A \cong A'$) if there is a bijective function $h: V_A
        \rightarrow V_{A'}$ such that $(h(v), h(w)) \in E_{A'}$ iff $(v, w) \in E_A$.

        A graph $A' = (V_{A'}, E_{A'})$ is a \emph{subgraph} of $A = (V_A, E_A)$ (denoted $A' \sqsubseteq A$) if $V_{A'} \subseteq V_A$
        and $E_{A'} \subseteq E_A$. If $A'$ is a proper subgraph, i.e., $A' \neq A$ we write $A' \sqsubset A$. $A'$ is \emph{subgraph isomorphic}
        to $A$ if there is a subgraph $A'' \sqsubseteq A$ such that $A' \cong A''$.
        An \emph{induced subgraph} of a vertex set $V' \subseteq V$ in $A = (V, E)$ is a subgraph $A' \sqsubseteq A$ with $A' = (V', E')$ and the
        property that, if any vertices $v, w \in V'$ are connected in $A$, then $(v, w) \in E'$. In an overloading of notation we
        also call a subgraph $A' = (V', E')$ of $A$ an induced subgraph, if $A'$ is the induced subgraph of $V'$ in $A$.

        A \emph{path} in a graph $A = (V, E)$ is a (non-empty) sequence of distinct vertices \mbox{$p = (v_0, v_1, \cdots, v_{n-1})$} with $v_i \in V, 0
        \leq i < n$ and \mbox{$(v_i, v_{i+1}) \in E$}, $0 \leq i < n$. The length of this path is denoted $|p|$. For two vertices
        $v, w \in V$, $p_A(v, w)$ gives a shortest path between $v$ and $w$ in $A$. 
        The longest shortest path in a graph is called the graph's \emph{diameter}.

                \vspace*{-2mm}
        \subsection{Quantum Circuit Mapping}\label{sec:mapping}
        Although the topic of quantum computing is vast and fascinating, understanding its intricacies is not
        required in order to understand the contents of this work. All one needs to know about quantum computing is
        that, as in classical computing, there is a circuit model for quantum computing. Graphically, a quantum circuit
        is made up of \emph{wires} representing qubits and \emph{gates} representing transformations of qubits denoted
        on the respective wires. A quantum circuit can then be written as a sequence of gates $G = g_0\cdots g_{|G|-1}$,
        where $|G|$ denotes the number of gates in the circuit (also called the circuit's size). Quantum gates can act on
        individual qubits or on multiple qubits simultaneously. 

        \begin{figure}[t]
          \begin{subfigure}[b]{0.2\linewidth}
            \centering
            \scalebox{0.8}{
              \Qcircuit @C=1.0em @R=0.8em @!R { \\
                \nghost{{q}_{3} :  } & \lstick{{q}_{3} :  } & \targ & \qw & \qw & \ctrl{3} & \qw & \qw\\
                \nghost{{q}_{2} :  } & \lstick{{q}_{2} :  } & \ctrl{-1} & \ctrl{1} & \qw & \qw & \qw & \qw\\
                \nghost{{q}_{1} :  } & \lstick{{q}_{1} :  } & \qw & \targ & \ctrl{1} & \qw & \qw & \qw\\
                \nghost{{q}_{0} :  } & \lstick{{q}_{0} :  } & \qw & \qw & \targ & \targ & \qw & \qw}}
            \caption{Quantum Circuit}\label{fig:ex-circuit}            
          \end{subfigure}\hspace*{.7cm}
          \begin{subfigure}[b]{0.2\linewidth}
            \centering
            \begin{tikzpicture}[overlay]
              \node[] at (0.5,1.1) {\small $Q_0$};
              \node[] at (0.95,0.4) {\small $Q_1$};
              \node[] at (1.7,.7) {\small $Q_2$};
              \node[] at (1.63,1.55) {\small $Q_3$};
              \node[] at (.95,1.7) {\small $Q_4$};
            \end{tikzpicture}
            \includegraphics[width=0.75\linewidth]{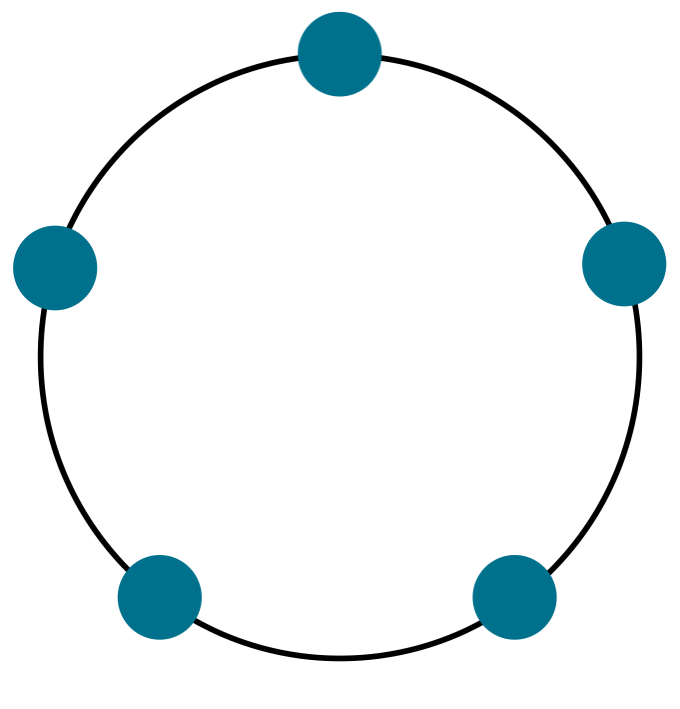}
            \caption{Architecture}\label{fig:arch}
          \end{subfigure}\hspace*{1cm}
          \begin{subfigure}[b]{.49\linewidth}
          \centering
          \begin{tikzpicture}[remember picture, overlay]
            \node[] at (-.5, 2.2) {$\alpha$};
            \draw [decorate, decoration = {calligraphic brace}] (-1,1.9) --  (-0,1.9);
            \node[] at (2.7, 0.23) {\small$q_1$};
            \node[] at (2.7, 0.72) {\small$q_0$};
            \node[] at (4.1, 0.72) {\small$q_2$};
            \node[] at (4.1, 1.2) {\small$q_0$};
          \end{tikzpicture}
          \begin{tikzpicture}
            \node[] at (0, 0) {
              \scalebox{0.8}{
                \Qcircuit @C=1.0em @R=0.8em @!R { \\
                   \lstick{{q}_{3} \mapsto {Q}_{3} :  } & \targ     & \qw      & \qw      & \qw            & \qw & \qw & \qw & \qw & \qw            & \qw & \qw & \qw & \qw & \qw & \ctrl{1} & \qw & \qw   & & \lstick{  {q}_{3}  } \\
                   \lstick{{q}_{2} \mapsto {Q}_{2} :  } & \ctrl{-1} & \ctrl{1} & \qw      & \qw            & \qw & \qw & \qw & \qw & \qswap \qwx[1] & \qw &     &     & \qw & \qw & \targ    & \qw & \qw   & & \lstick{  {q}_{0}  } \\
                   \lstick{{q}_{1} \mapsto {Q}_{1} :  } & \qw       & \targ    & \ctrl{1} & \qswap \qwx[1] & \qw &     &     & \qw & \qswap         & \qw &     &     & \qw & \qw & \qw      & \qw & \qw   & & \lstick{  {q}_{2}  } \\
                   \lstick{{q}_{0} \mapsto {Q}_{0} :  } & \qw       & \qw      & \targ    & \qswap         & \qw &     &     & \qw & \qw            & \qw & \qw & \qw & \qw & \qw & \qw      & \qw & \qw   & & \lstick{  {q}_{1} }
                }} };
          \end{tikzpicture}
	\caption{Mapped circuit}\label{fig:mapping}
      \end{subfigure}        \vspace*{-5mm}
          \caption{Example instance of the mapping problem}\label{fig:mapping-instance}\vspace*{-5mm}
        \end{figure}

        Existing quantum
        computers usually offer a native gate set comprised of a family of singe-qubit gates complemented with some two-qubit gate (e.g. the
        controlled not or $\cnot$ gate). In addition, these quantum computers often do not allow execution of two-qubit gates between any qubit
      pair. Instead, the target device's architecture only allows interactions between certain qubits. This restricted
      connectivity is expressed through the device's \emph{coupling graph} $A = (Q, E)$, where $Q$ are \emph{physical
        qubits} and $(Q_i, Q_j) \in E$ if a two-qubit gate can be executed between $Q_i$ and $Q_j$. Coupling graphs can
      be \emph{directed} or \emph{undirected}. In the former case, executing a gate like the two-qubit $\cnot$ on a directed edge
      $(Q_i, Q_j)$ can only be performed with $Q_i$ as the first and $Q_j$ as the second qubit. In the following, we
      only consider undirected coupling graphs as our ideas can be straightforwardly extended to the directed case. For
      the sake of brevity we will use the term \emph{architecture} when referring to the coupling graph of a quantum
      computing architecture.
	
      To run a quantum algorithm on a target device, the logical qubits~$q$ first have to be mapped to the physical qubits~$Q$. 
      A \emph{qubit assignment} is an injective function $\alpha: q \rightarrow Q$, i.e., a function that
      uniquely assigns each logical qubit a physical qubit. As long as $|q| \leq |Q|$, such a
      mapping can always be obtained. Executing a quantum circuit $G = g_0\cdots g_{|G|-1}$ on a device with coupling graph
      $A$ and initial assignment $\alpha$ is only possible if all two-qubit gates of the circuit act on qubits connected on the
      architecture. If this is not possible, the assignment has to be changed dynamically throughout the circuit in order for each gate
      of $G$ to be executable. This is generally accomplished by adding $\swap$ gates to the circuit, which exchange two qubits in an assignment.

      The mapping problem for a quantum circuit is to find an initial assignment $\alpha$ and a (potentially empty) sequence of
      $\swap$ insertions such that the entire circuit can be executed on the target architecture. 
      We denote the optimal, i.e. minimal, number of swaps when mapping a quantum circuit $G$ to architecture $A$ with $\sopt(G, A)$.
	
        \vspace*{-1mm}
	\begin{example}\label{ex:trivial_mapping}
          Assume the circuit shown in \autoref{fig:ex-circuit} is to be mapped to the architecture defined by the
          coupling graph shown in \autoref{fig:arch}. No initial assignment exists such that this circuit can be executed
          without inserting $\swap$ gates. Taking the initial assignment $\alpha$ depicted on the left-hand side of
          \autoref{fig:mapping} allows the direct execution of the first three $\cnot$ gates. However, the $\cnot$
          between logical qubits $q_3$ and $q_0$ cannot be executed with this assignment. Inserting a $\swap$ between
          physical qubits $Q_1$ and $Q_0$ followed by a $\swap$ between physical qubit $Q_1$ and $Q_2$ leads to the
          assignment $$\{q_3 \mapsto Q_3, q_2 \mapsto Q_1, q_1 \mapsto Q_0, q_0 \mapsto Q_2\}.$$ Therefore, the $\cnot$
          between $q_3$ and $q_0$ translates to a $CNOT$ between $Q_3$ and $Q_2$ which can be directly executed. 
	\end{example}        \vspace*{-1mm}
	
	It is easy to see, that mapping a circuit can always be done by choosing some initial layout and greedily
        inserting $\swap$ gates. This leads to circuits with many $\swap$ gates---an undesirable property on NISQ
        devices due to the relatively high error rate of two-qubit gates. However, finding good solutions to the mapping problem is a challenging task in general and has even been shown to be
        \mbox{NP-hard}~\cite{boteaComplexityQuantumCircuit2018, siraichiQubitAllocation2018,maslovQuantumCircuitPlacement2007,tanOptimalityStudyExisting2021}. Although these works make slightly different assumptions about the qubit mapping problem, the problem's complexity stays the same overall.

        \vspace*{-1mm}
	\section{Considered Problem}\label{sec:considered-problem}
        \vspace*{-1mm}
        
	Given an $n$-qubit quantum circuit $G$ and an architecture $A$ (with $|A|\geq n$), the search space of the mapping problem is spanned by:
	\begin{enumerate}
        \item \emph{Qubit allocation:} 
          Which subarchitecture $A' $ of $ A$ consisting of at least $n$ qubits shall be considered for the subsequent routing?
        \item \emph{Qubit routing:}
          Assuming that $|A'|$ qubits have been allocated, which of the $|A'|!$ permutations---realizing arbitrary
          transitions between assignments via sequences of SWAP gates---to consider in front of every two-qubit gate?
        \end{enumerate}

        \begin{figure}[t]
          \centering
          \includegraphics[width=0.9\linewidth]{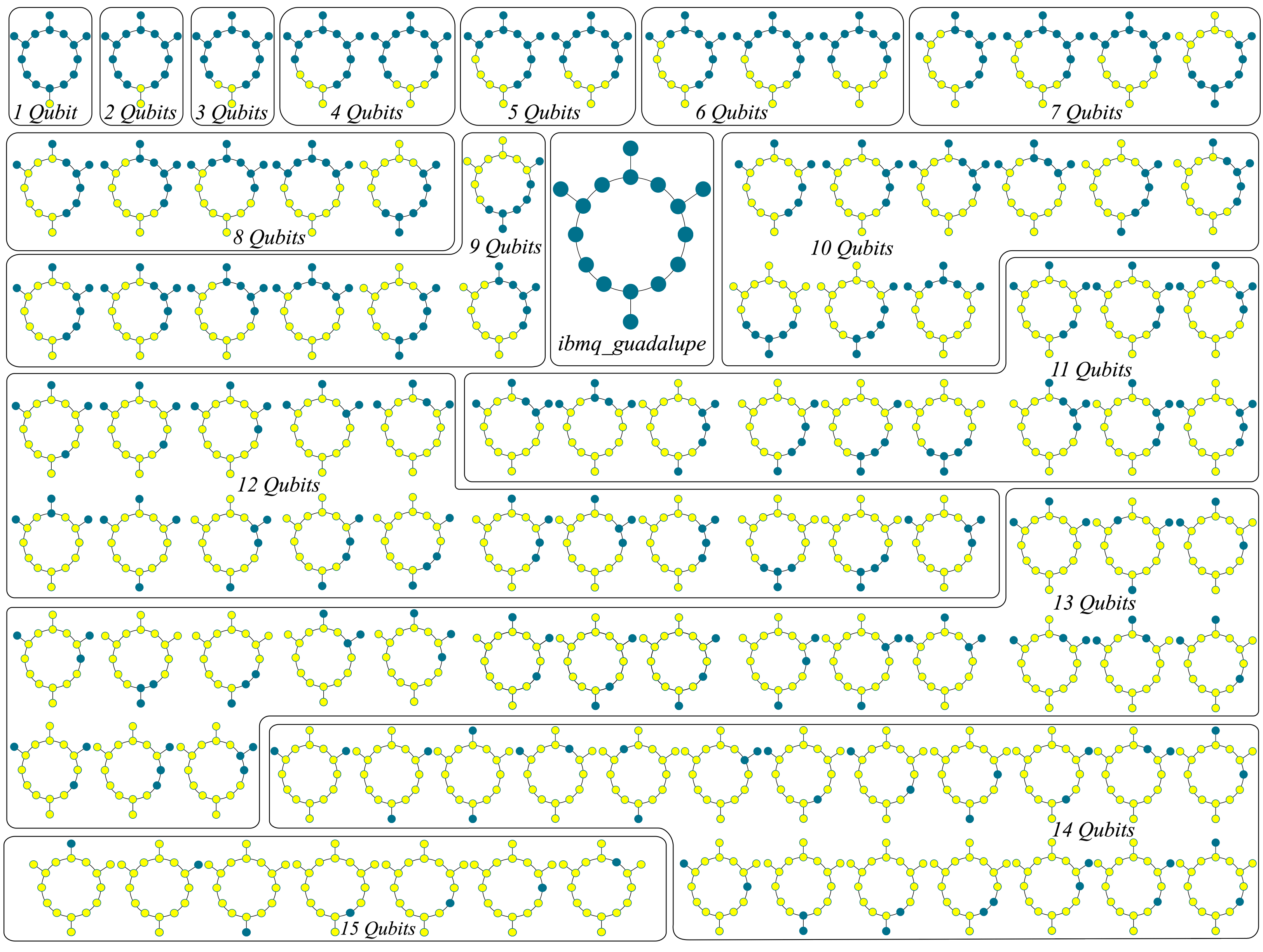}
          \caption[Non-isomorphic subgraphs of \emph{ibmq\_guadalupe}]{Non-isomorphic subgraphs (marked as \tikz{\draw[fill=myyellow,line width=1pt]  circle(1ex);}) of the $16$-qubit \emph{ibmq\_guadalupe} architecture (shown in the middle) grouped by their number of qubits.}
          \label{fig:ibm-guadalupe-subgraphs}\vspace*{-3mm}
        \end{figure}
                      
	As evident from the above, the search space in quantum circuit mapping grows exponentially with respect to the number of physical qubits allocated for the mapping process, i.e., the size of the considered subarchitecture.
	Thus, pruning parts of this search space 
	is absolutely crucial for aiding mapping methods in efficiently traversing the search space.
	However, while limiting the search space, in general, increases the efficiency of mapping techniques (since a smaller search space is easier to explore), it bears the risk of cutting off regions which contain optimal/efficient solutions. 
	Consequently, extra care has to be taken when 
	\begin{enumerate}
		\item reducing the number of allocated physical qubits $|A'|$, or
		\item limiting the considered permutations during qubit routing.
	\end{enumerate}
	Most existing techniques that try to efficiently find good mapping solutions tend to focus on the second aspect, i.e., 
	limiting the number of considered permutations or combinations of SWAPs during qubit
        routing~\cite{liTacklingQubitMapping2019, burgholzerLimitingSearchSpace2022, tanOptimalLayoutSynthesis2020,
          cowtanQubitRoutingProblem2019, sinhaQubitRoutingUsing2022, palerMachineLearningOptimization2022,
          baiolettiNovelAntColony2021, pozziUsingReinforcementLearning2020,debExploringPotentialBenefits2021}.
	On the other hand, the first aspect has hardly been considered thus far.

        \begin{example}\label{ex:subgraphs}
          In order to get an impression of the qubit allocation problem, \autoref{fig:ibm-guadalupe-subgraphs} shows all non-isomorphic subgraphs of the $16$-qubit \emph{ibmq\_guadalupe} architecture---grouped by their number of qubits.
          Assume that a $9$-qubit circuit shall be mapped to this architecture.
          Then, every subgraph with at least $9$ qubits is a potential candidate for mapping---amounting to a
          total of $91$ options.
        \end{example}

        Ideally, one would strive to allocate as few qubits as possible, in particular only as many as there are logical
        qubits in the circuit. This would entail the greatest search-space reduction as only as many qubits are
        allocated as are strictly necessary to ensure the existence of a mapping. Cutting off parts of the search space
        might, however, have the unwanted effect of eliminating optimal or even efficient solutions. Experimental
        evaluations on optimal quantum circuit mapping suggested that this is not the
        case~\cite{willeMappingQuantumCircuits2019,burgholzerLimitingSearchSpace2022}.

        This work remedies this misconception by showing that considering larger than strictly necessary subarchitectures 
        can be beneficial when searching for efficient solutions. This
        shows, once and for all, that considering subarchitectures of minimal size is not sufficient. 
        However, instead of being content with a negative result, we then explore the following problem.

        \begin{problem}        
          Let $A'$ and $A''$ be two subarchitectures of an architecture $A$. $A'$ is said to possess \emph{higher
            coverage} than $A''$, i.e., $A'' <_\text{cov}^n A'$ with respect to $n \in \mathbb{N}$ if:

          \begin{itemize}
          \item Any quantum circuit $G$ over $n$ qubits can be optimally mapped to $A'$ with no more $\swap$s than are needed for
            mapping $G$ to $A''$, i.e., $\forall G: \sopt(G,A') \leq \sopt(G, A'')$.
          \item There is a quantum circuit $G$ over $n$ qubits that can be mapped to $A'$ using less $\swap$s than required for mapping
            it to $A''$,~i.e, $\exists G: \sopt(G, A') < \sopt(G, A'')$.
          \end{itemize}
          If only the first property is fulfilled we write $A'' \leq_\text{cov}^n A'$.
          
          The problem of finding the optimal subarchitectures of $A$ for a given size $n$ is the problem of finding the
          smallest (with respect to the size of the subarchitectures) maximal elements with respect
          to~$<_\text{cov}^n$,~i.e., $$\min_{|A'|} \; \max_{\leq_\text{cov}^n} \{A' : A' \sqsubseteq A\}.$$ 
        \end{problem}

\section{Existence of Better Mappings on Larger Subarchitectures}\label{sec:better-mappings}

In this section, we show that considering larger subgraphs can, under certain circumstances, lead to better solutions
for the mapping problem.
To this end, \autoref{sec:shorter_connection} and \autoref{sec:more_subgraphs} establish two criteria that allow
for ranking subarchitectures according to their coverage. These criteria are constructive in the sense that
they allow for the definition of a method for computing good subarchitectures (with respect to $<_\text{cov}^n$)---trading off subarchitecture size and, by extension, search space size for the mapping
problem, and likelihood of eliminating efficient mapping solutions. Afterwards, we give a complete
characterization of the smallest maximal elements with respect to $<_\text{cov}^n$ among all the non-isomorphic
subarchitectures of a given architecture. This proves that merely considering the subgraphs $A'$ of an
architecture $A$ with as many qubits as the 
circuit to be mapped can lead to situations where the optimal solution to the mapping problem is no longer contained in
the search space induced by the allocated qubits.

The proofs in this work require somewhat technical circuit constructions. Instead of listing the details here, an intuition of these constructions will be given. These can be generalized to arbitrary (sub-)architectures.

\vspace*{-1mm}
\subsection{Subarchitectures with Shorter Connections}\label{sec:shorter_connection}
\vspace*{-1mm}
The first criterion is based on the observation that subarchitectures with shorter connections between certain qubits potentially allow for better mappings---even if they involve more qubits.
 An example illustrates the idea:

        \begin{figure}[t]
          \centering
                    \begin{subfigure}[b]{0.2\linewidth}
            \centering
            \scalebox{0.8}{
              \Qcircuit @C=1.0em @R=0.8em @!R { \\
                \nghost{{q}_{3} :  } & \lstick{{q}_{3} :  } & \targ & \qw & \qw & \ctrl{3} & \qw & \qw\\
                \nghost{{q}_{2} :  } & \lstick{{q}_{2} :  } & \ctrl{-1} & \ctrl{1} & \qw & \qw & \qw & \qw\\
                \nghost{{q}_{1} :  } & \lstick{{q}_{1} :  } & \qw & \targ & \ctrl{1} & \qw & \qw & \qw\\
                \nghost{{q}_{0} :  } & \lstick{{q}_{0} :  } & \qw & \qw & \targ & \targ & \qw & \qw}}
            \caption{Quantum Circuit}
          \end{subfigure}\hspace*{.7cm}
          \begin{subfigure}[b]{0.2\linewidth}
            \centering
            \begin{tikzpicture}[overlay]
              \node[] at (0.5,1.1) {\small $Q_0$};
              \node[] at (0.95,0.4) {\small $Q_1$};
              \node[] at (1.7,.7) {\small $Q_2$};
              \node[] at (1.63,1.55) {\small $Q_3$};
              \node[] at (.95,1.7) {\small $Q_4$};
            \end{tikzpicture}
            \includegraphics[width=0.75\linewidth]{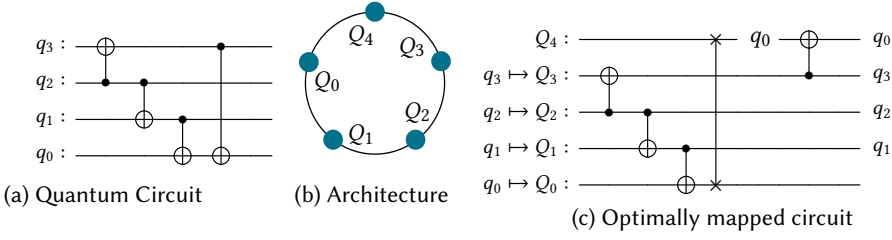}
            \caption{Architecture}
          \end{subfigure}
          \begin{subfigure}[b]{.44\linewidth}           
          \begin{tikzpicture}[overlay]
            \node[] at (3.7,2.15) {\small$q_0$};
          \end{tikzpicture}
          \begin{tikzpicture}
            \node[]        {\scalebox{0.8}{
                \Qcircuit @C=1.0em @R=0.8em @!R { \\
                  \nghost{{Q}_{0} :  } & \lstick{{Q}_{4} :  }                 & \qw       & \qw      & \qw      & \qswap \qwx[4] & \qw &     &     & \targ     & \qw & \qw &  &  \lstick{  {q}_{0}} \\
                  \nghost{{Q}_{1} :  } & \lstick{{q}_{3} \mapsto {Q}_{3} :  } & \targ     & \qw      & \qw      & \qw            & \qw & \qw & \qw & \ctrl{-1} & \qw & \qw &  &  \lstick{  {q}_{3}} \\
                  \nghost{{Q}_{2} :  } & \lstick{{q}_{2} \mapsto {Q}_{2} :  } & \ctrl{-1} & \ctrl{1} & \qw      & \qw            & \qw & \qw & \qw & \qw       & \qw & \qw &  &  \lstick{  {q}_{2}} \\
                  \nghost{{Q}_{3} :  } & \lstick{{q}_{1} \mapsto {Q}_{1} :  } & \qw       & \targ    & \ctrl{1} & \qw            & \qw & \qw & \qw & \qw       & \qw & \qw &  &  \lstick{  {q}_{1}} \\
                  \nghost{{Q}_{4} :  } & \lstick{{q}_{0} \mapsto {Q}_{0} :  } & \qw       & \qw      & \targ    & \qswap         & \qw & \qw & \qw & \qw       & \qw & \qw &  &  \lstick{ }
                }}            };
          \end{tikzpicture}\vspace*{-2mm}
          \caption{Optimally mapped circuit}\vspace*{-3mm}
          \end{subfigure}
          \caption{Optimal solution to the mapping problem from \autoref{fig:mapping-instance} considering the complete architecture}
          \label{fig:circuit-mapped-cycle}\vspace*{-3mm}
        \end{figure}
        
        \begin{example}\label{ex:circle}
          Consider again the circuit shown in \autoref{fig:ex-circuit} and assume it shall be mapped to
          the $5$-qubit ring architecture shown in \autoref{fig:arch}. 
          Previously, in \autoref{ex:trivial_mapping}, the circuit has been mapped to the $4$-qubit line
          $Q_0 - Q_1 - Q_2 - Q_3$, which resulted in the addition of two SWAP gates in order to satisfy
          the coupling constraints (as illustrated in \autoref{fig:mapping}).	 
          It can be shown that this is the optimal number of SWAP operations given that only four of the architecture's qubits are considered.

          If, however, the whole architecture is considered (i.e., including the idle qubit $Q_4$), the optimal solution just requires a single SWAP.
          The resulting circuit is shown in \autoref{fig:circuit-mapped-cycle}.
        \end{example}

        Based on the previous example, it is clear that extending a \mbox{subarchitecture} to a larger one might admit a
        better mapping for a quantum circuit if the path between some qubits is shorter in the larger subgraph.  
        This insight suggests a natural way of ordering the subarchitectures of an architecture:

        \begin{definition}\label{def:order}
        		Let the (partial) order $\preceq$ be defined as a relation on graphs with the following properties:\\
           	Let \mbox{$A = (V_A, E_A)$} and \mbox{$A' = (V_{A'}, E_{A'})$} be two graphs. 
           	Then, $A' \prec A$ iff $A' \sqsubseteq A$ with subgraph isomorphism $h$ and there are two nodes that have a shorter path between them in $A$ than the corresponding nodes in $A'$, i.e.,   \[\exists\; v, w \in V_{A'}\colon |p(v,w)| > |p(h(v), h(w))|.\]
                If $A'$ and $A$ are only subgraph isomorphic with no shorter paths then we write $A' \preceq A$.                
        \end{definition}

        This order has the desirable property that $A' \preceq A$ iff the optimal mapping of an arbitrary quantum circuit~$G$ to the larger architecture~$A$ is
        at least as good as the mapping of~$G$ to subarchitecture~$A'$, i.e., $A' \leq_\text{cov}^{|A'|} A$.
        The reason for this is simple: Since $A'$ is a subarchitecture of $A$, any quantum circuit $G$ mapped to $A'$ can be mapped to $A$ in exactly the same fashion.

        Moreover, this partial ordering provides a way of ranking the subarchitectures of an architecture:
        \begin{theorem}\label{thm:opt-layout}
          Let $A'$ be a subarchitecture of an architecture $A$ such that $A' \prec A$.
          Then, a quantum circuit $G$ exists that is cheaper to map to $A$ instead of $A'$, i.e., $\sopt(G, A) <
          \sopt(G, A')$ and, by extension, $A' <_\text{cov}^{|A'|} A$.
        \end{theorem}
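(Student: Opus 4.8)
The plan is to prove the two requirements of $A' <_\text{cov}^{|A'|} A$ separately. The first requirement, that $\sopt(G,A) \le \sopt(G,A')$ for \emph{every} circuit $G$ over $n := |A'|$ qubits, is already handed to us by the remark preceding the theorem: since $A' \sqsubseteq A$, any mapping of $G$ to $A'$ is literally also a mapping of $G$ to $A$, so $A' \preceq A$ yields $A' \leq_\text{cov}^{|A'|} A$. Hence the entire content lies in the second requirement: \emph{exhibiting a single witness circuit} $G$ with $\sopt(G,A) < \sopt(G,A')$. To this end I fix the subgraph isomorphism $h$ of \autoref{def:order} together with a pair $v,w \in V_{A'}$ realizing the strict path inequality guaranteed by $A' \prec A$, and abbreviate $\ell' := |p_{A'}(v,w)|$ and $\ell := |p_A(h(v),h(w))|$, so that $\ell' > \ell$.

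Next I construct $G$ on $n$ logical qubits, which I identify with the vertices $V_{A'}$. The circuit is a \emph{locking gadget} followed by a single \emph{trigger gate}. The locking gadget places one two-qubit gate on every edge $(a,b)\in E_{A'}$, so that its interaction graph is exactly $A'$; this gadget is concatenated $k$ times, and then the trigger gate acts on the pair $(q_v,q_w)$. The guiding intuition is that the gadget is executable without any $\swap$s only when the logical qubits sit in a layout embedding $A'$ into the architecture. On $A'$ itself, where $|A'|=n$ forces the layout to be a bijection, such an embedding is precisely an automorphism of $A'$; because automorphisms preserve distances, any such layout keeps $q_v$ and $q_w$ at distance exactly $\ell'$. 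On $A$, by contrast, the embedding $h$ realizes the gadget for free while keeping the two qubits at the strictly shorter distance $\ell$.

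From this I read off the two bounds. For the upper bound on $A$ I simply exhibit the mapping that lays the qubits out via $h$ (independently of $k$): every gadget gate then runs on an edge of $A$ with no $\swap$, and the trigger gate is enabled by routing $q_v$ towards $q_w$ along $p_A(h(v),h(w))$ at a cost of $\ell-1$ $\swap$s, so $\sopt(G,A) \le \ell-1$. For the lower bound I claim $\sopt(G,A') \ge \ell'-1$; granting this, the chain $\sopt(G,A) \le \ell-1 < \ell \le \ell'-1 \le \sopt(G,A')$ gives the desired strict inequality, and together with the first requirement establishes $A' <_\text{cov}^{|A'|} A$. The easy part of the lower bound is the class of strategies that keep the gadget $\swap$-free: their layout is an automorphism, $q_v,q_w$ remain at distance $\ell'$, and since each $\swap$ changes that distance by at most one, enabling the trigger costs at least $\ell'-1$.

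The hard part, and the main obstacle, is making the lower bound airtight against \emph{mixed} strategies that deliberately violate a few gadget edges so that a single $\swap$ both completes a pending gadget gate and pre-positions $q_v,q_w$ for the trigger; with a single gadget copy such tricks genuinely beat $\ell'-1$, so the bound is false for $k=1$. This is exactly what the repetition parameter $k$ is for: any layout that brings $q_v,q_w$ strictly closer than $\ell'$ is a non-automorphism and therefore maps some gadget edge to a non-edge, incurring at least one $\swap$ in \emph{every} copy executed from it, so a $\swap$ reused to position for the trigger must be ``re-spent'' on each of the remaining copies. Choosing $k$ larger than the diameter of $A'$ then makes every distance-reducing strategy cost strictly more than the $\ell'-1$ $\swap$s it tries to avoid. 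Turning this amortized counting over the $k$ repetitions into a clean inequality, rather than merely treating the two extreme strategies, is the technical heart of the proof.
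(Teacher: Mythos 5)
Your proposal is correct and follows essentially the same route as the paper's (sketch-level) proof: pick a pair $v,w$ whose connection is strictly shorter in $A$ and exhibit a circuit that forces an interaction between them, so that the shorter path in $A$ wins; your locking gadget and repetition parameter $k$ are exactly the kind of \enquote{technical circuit construction} the paper says it omits, and they mirror the repetition trick in the paper's own sketch of \autoref{thm:iso-opt}. The one step you leave open---the amortized lower bound on $A'$---closes with a short pigeonhole argument: if fewer than $\min(k,\ell'-1)$ $\swap$s were used in total, some gadget copy would be $\swap$-free, the layout during that copy would then be a bijective endomorphism and hence an automorphism of $A'$ leaving $q_v,q_w$ at distance $\ell'$, and enabling the trigger afterwards would still require at least $\ell'-1$ further $\swap$s, a contradiction.
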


        \begin{proof}
          The intuition behind the proof is that, given $A'\prec A$, there exist vertices $v$ and $w$ in $A'$ that have a shorter connection in $A$.
          If, during mapping, an interaction between $v$ and $w$ is required, this interaction can be realized more efficiently in $A$ due to the shorter connection.
        \end{proof}

        \begin{example}
          Consider again the scenario from \autoref{ex:circle}. 
          Then, the $4$-qubit line is strictly smaller than the $5$-qubit ring with respect to $\preceq$ (i.e., $\mathtt{line} \prec \mathtt{ring}$)
          since it holds that 
          \[|p_\mathit{line}(Q_0, Q_3)| = 3 > 2 = |p_\mathit{ring}(Q_0, Q_3)|.\]
          As a consequence of \autoref{thm:opt-layout}, there exists a circuit (e.g., the one shown in \autoref{fig:ex-circuit}) which has a more efficient mapping solution on the $5$-qubit ring (see \autoref{fig:circuit-mapped-cycle}) as compared to the $4$-qubit line (see \autoref{fig:mapping}).
        \end{example}

        	For any graph $A$ and subgraph $A'$, the set of \emph{desirable} subarchitectures $\mathcal{D}(A', A)$ is given by the maximal elements with respect to $\prec$, i.e., 
        	\[
        	\mathcal{D}(A', A) = \left\{D\colon\; A'\sqsubseteq D\sqsubseteq A \mbox{ and } \not\exists\,D'\sqsubseteq A\colon\; A' \prec D' \wedge D \prec D' \right\}.
        	\]
        	This set contains all subarchitectures worth considering during mapping due to their potential of providing a better mapping solution according to the ranking defined by the order $\preceq$.

                \vspace*{1mm}
                \subsection{Subarchitectures without Shorter Connections}\label{sec:more_subgraphs}
                \vspace*{2mm}
		
		As shown above, a better mapping can potentially be achieved if two qubits have a shorter connection on a larger subarchitecture than the one originally considered.
		The natural follow-up question is: Can a single additional qubit, that does \emph{not} lead to a shorter connection between some qubits, bring any improvements in mapping a particular quantum circuit?
		In the following, we give an affirmative answer.
		The main idea is that, given an architecture $A$ and a subarchitecture $A'$, a larger subarchitecture
                $A'\sqsubseteq A'' \sqsubseteq A$ can allow for a better mapping, if it contains at least one more
                subarchitecture of size $|A'|$ that is not isomorphic to $A'$. 
		Again, an example illustrates the idea:

        \begin{example}\label{ex:iso-opt}
          Instead of a $5$-qubit ring (as in \autoref{ex:circle}), consider an architecture as shown on the bottom of \autoref{fig:iso-opt}.
          This architecture contains two non-isomorphic $4$-qubit subarchitectures as shown in the middle of \autoref{fig:iso-opt}---a \enquote{line} and a \enquote{T}-shaped subarchitecture.
          The quantum circuits $G$ and $G'$ shown on top of the subarchitectures can be mapped to the respective subarchitecture
          without SWAPs, while they require a single SWAP to be realized on the other subarchitecture.
          	
          Now, consider the quantum circuit $G^4 G^{\prime 4}$, i.e., a circuit composed of four repetitions of each of the two circuits described above.
          Mapping this circuit to either of the $4$-qubit subarchitecture requires at least four SWAPs
          (since every repetition of the circuit which was not designed for the respective
          subarchitecture introduces a SWAP). 
          However, it only takes three SWAPs to transform the qubit assignment on the \enquote{line} subarchitecture to
          a qubit assignment on the \mbox{\enquote{T}-shaped} subarchitecture on the whole architecture.
          Therefore, considering the whole architecture allows for a better mapping than mapping to either \mbox{subarchitecture}.
        \end{example}

        \begin{figure}[t]
          	\centering
          	\includegraphics[width=0.5\linewidth]{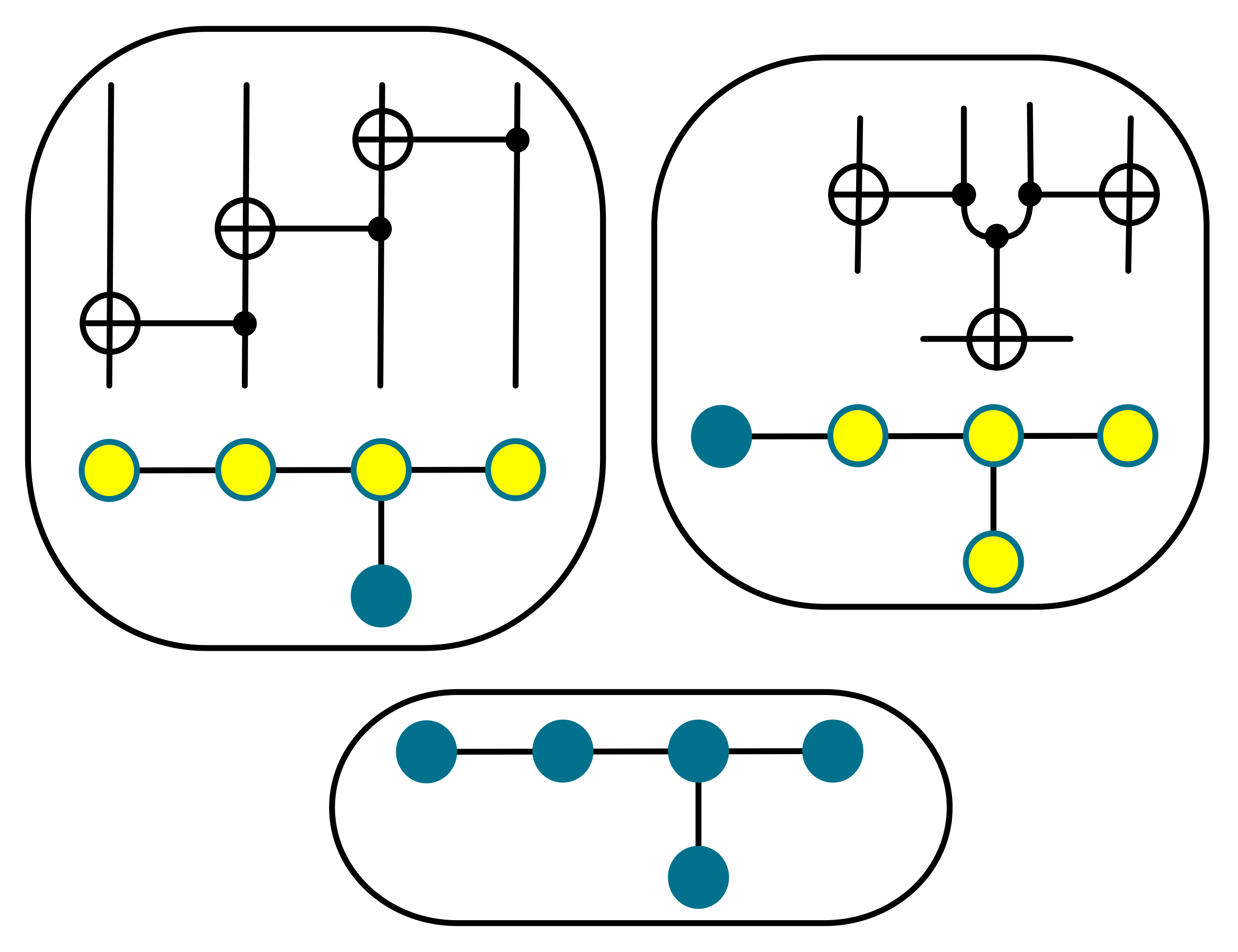}\vspace*{-3mm}
          \caption[Two layout scenario]{$5$-qubit architecture (bottom), its $4$-qubit subgraph candidates (marked as \tikz{\draw[fill=myyellow,line width=1pt]  circle(1ex);}), and two circuits each of which can be mapped to one of the subgraphs without SWAPs (top).}
          \label{fig:iso-opt}
        \end{figure}
        
        	The following theorem formalizes this observation:

        \begin{theorem}\label{thm:iso-opt}
        		Let $A', A''$ be equally-sized (i.e., \mbox{$|A'| = |A''| = n$}), non-isomorphic (i.e., $A'\not\cong A''$)
                        proper induced subarchitectures of an architecture $A$ (i.e., $A'\sqsubset A$ and $A''\sqsubset A$).%
        		Then there is a quantum circuit $G$ that is cheaper to map to $A$ than to each individual
                        \mbox{subarchitecture}, i.e., $A' <_\text{cov}^n A$ and $A'' <_\text{cov}^n A$. 
        \end{theorem}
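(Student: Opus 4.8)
The plan is to combine two observations: passing to the larger architecture is never harmful, and the genuinely different shapes of $A'$ and $A''$ can be played against each other by a circuit that repeatedly demands incompatible connectivity. For the harmless direction, note that $A'$ and $A''$ are \emph{induced} subgraphs of $A$, so every edge they use is also an edge of $A$. Hence any initial assignment together with any $\swap$ sequence that maps an $n$-qubit circuit $G$ into $A'$ (resp.\ $A''$) is, verbatim, a valid mapping into $A$. This gives $\sopt(G,A)\le\sopt(G,A')$ and $\sopt(G,A)\le\sopt(G,A'')$ for every $G$, i.e.\ $A'\le_\text{cov}^n A$ and $A''\le_\text{cov}^n A$ (the same reasoning as the remark following \autoref{thm:opt-layout}). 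It then remains to produce a single witness circuit making both inequalities strict.

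Since $A'\not\cong A''$ have equal size, they cannot both be subgraph-isomorphic to one another; the essential case, illustrated by \autoref{ex:iso-opt}, is the one in which \emph{neither} shape embeds into the other, so each realises an adjacency the other lacks. I would exploit this to build two \emph{signature} circuits $H_1,H_2$ on $n$ logical qubits: let $H_1$ place a two-qubit gate on every edge of $A'$, so its interaction pattern embeds in $A'$ at cost $0$ but, failing to embed in $A''$, forces at least one $\swap$ there; symmetrically $H_2$ is native to $A''$. Thus $\sopt(H_1,A')=0$, $\sopt(H_1,A'')\ge 1$ and $\sopt(H_2,A'')=0$, $\sopt(H_2,A')\ge 1$.

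For a parameter $k$ I would take the witness $G=H_1^{k}H_2^{k}$, mirroring \autoref{ex:iso-opt}. On $A'$ the prefix $H_1^{k}$ is free, while each copy of $H_2$ demands connectivity no fixed assignment on $A'$ can provide, so the suffix costs at least one $\swap$ per copy, giving $\sopt(G,A')\ge k$; by symmetry $\sopt(G,A'')\ge k$. On $A$, however, I execute the prefix with an assignment realising $A'$ on its vertices, then spend a fixed number $c$ of $\swap$s — depending only on $A$, not on $k$ — to permute the logical qubits onto the vertices of $A''$ (using the extra qubits of $A$, exactly like the three reconfiguring $\swap$s in \autoref{ex:iso-opt}), and finally execute the suffix for free, so that $\sopt(G,A)\le c$. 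Since $A$ is connected, such a reconfiguring permutation always exists at a cost bounded by a function of $|A|$ alone. Choosing $k>c$ yields $\sopt(G,A)\le c<k\le\min\{\sopt(G,A'),\sopt(G,A'')\}$, establishing $A'<_\text{cov}^n A$ and $A''<_\text{cov}^n A$ with the single circuit $G$.

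The main obstacle is the linear lower bound $\sopt(G,A')\ge k$ (and its mirror). The delicate point is \emph{amortisation}: a priori a single reconfiguration might serve many copies of $H_2$, collapsing the suffix cost to a constant. The construction must rule this out, which is precisely why $H_2$ is chosen native to a shape not realisable inside $A'$ — no reachable assignment executes even one copy of $H_2$ without a $\swap$, and because the copies are concatenated the per-copy deficits cannot be shared. Turning this into a rigorous count (e.g.\ a potential/adversary argument over the attainable assignments) is the technical heart; the embedding characterisation behind the signature circuits and the constant reconfiguration bound are comparatively routine. I would also note that the signature circuits genuinely require mutual non-embeddability, so the complementary situation in which one shape embeds in the other must be routed through the $\preceq$-based criterion of \autoref{thm:opt-layout} rather than this construction.
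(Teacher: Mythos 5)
Your construction is, in essence, the paper's own: the paper builds circuits $G'$ and $G''$ native to $A'$ and $A''$ respectively, lets $S$ be the number of $\swap$s needed to transform an assignment onto $A'$ into one onto $A''$ inside $A$, and takes $G=(G')^{S+1}(G'')^{S+1}$ --- exactly your $H_1^{k}H_2^{k}$ with $k>c$. Your two caveats (the amortised lower bound $\sopt(G,A')\ge k$, and the need for \emph{mutual} non-embeddability of the two shapes) are precisely the points the paper's proof sketch glosses over, so at the sketch level your proposal is at least as careful as the paper's.

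The substantive issue is your final caveat, and it cuts deeper than you suggest. The hypotheses of \autoref{thm:iso-opt} do not exclude the nested case, even for induced subarchitectures: a $4$-cycle with one pendant vertex contains an induced $4$-path and an induced $4$-cycle, and the path is a spanning subgraph of the cycle. In that case your fallback through \autoref{thm:opt-layout} only yields strictness against the \emph{sparser} graph: from $A'\prec A''$ you obtain a circuit with $\sopt(G,A)\le\sopt(G,A'')<\sopt(G,A')$, i.e.\ $A'<_\text{cov}^n A$, but nothing forces $A''<_\text{cov}^n A$. And nothing can: take $A=K_5$ minus one edge, $A''$ the induced $K_4$, and $A'$ the induced $K_4$ minus an edge. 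These satisfy every hypothesis of \autoref{thm:iso-opt}, yet $\sopt(G,A'')=0$ for \emph{every} $4$-qubit circuit $G$, so no circuit is cheaper on $A$ than on $A''$. Hence the non-embeddability you identified is not a case to be routed through a different argument but a missing hypothesis of the statement itself; under that added hypothesis (neither shape subgraph-isomorphic to the other), your argument --- like the paper's --- goes through modulo the amortisation count that you correctly single out as the technical heart and that neither you nor the paper's sketch actually carries out.
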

        
        \begin{proof}
          The intuition behind this proof is that, similar to \autoref{ex:iso-opt}, for each subarchitecture, a circuit can be
          constructed, that does not require any SWAPs to be executed on that subarchitecture, but will require at
          least one SWAP on the other subarchitecture. 
          In addition, there is a certain number of SWAPs~$S$ that is needed to transform an assignment to $A'$ to an
          assignment to $A''$ on the whole architecture.  
          If $G'$ ($G''$) is the circuit corresponding to $A'$ ($A''$), then the circuit $G = (G')^{S+1} (G'')^{S+1}$ is
          cheaper to map to $A$ than to either $A'$ or $A''$. 
        \end{proof}

        Thus, even a single additional qubit that does not induce a shorter connection may yield a better overall mapping.

        \subsection{Characterizing Optimal Subarchitectures}\label{sec:opt}
        In the above, the benefit of an additional qubit is due to different parts of a circuit being more efficiently executable on different \mbox{subarchitectures}.
        This implies that all the desirable \mbox{subarchitectures} need to be covered in order to not lose optimality.
        The following example demonstrates, that extra care needs to be taken when determining such a covering.

                \begin{figure}[t]
                  \centering
                  \begin{tikzpicture}[overlay]
                    \node[] at (1.9,1.7) {$A''$};
                    \node[] at (5,2.1) {$A'$};
                    \node[] at (8.8,1.7) {$A'''$};
                  \end{tikzpicture}
                  \includegraphics[width=0.75\linewidth]{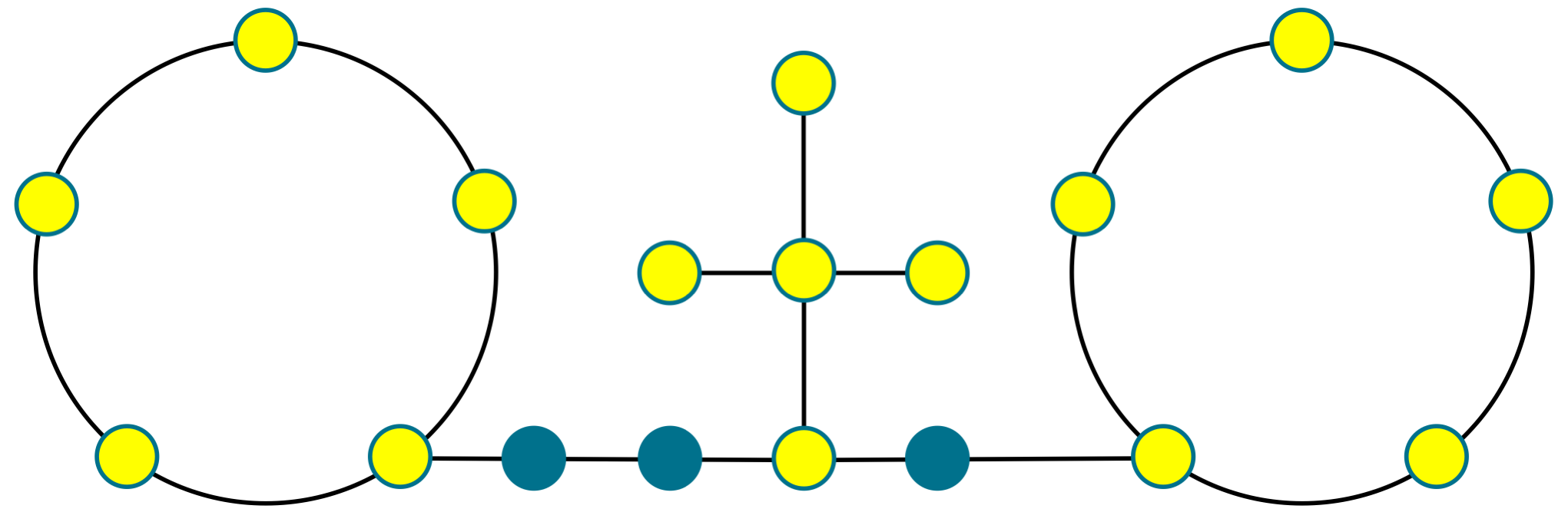}
                  \caption{Subarchitecture transformations}
                  \label{fig:transformation}
                \end{figure}

                \begin{example}
                  The architecture $A$ shown in \autoref{fig:transformation} has, among others, the three $5$-qubit subarchitectures labeled $A', A''$ and $A'''$.
                	It is immediate that $A''$ and $A'''$ are isomorphic. 
                	According to \autoref{thm:iso-opt}, there exists a $5$-qubit circuit $G$ that can be more efficiently mapped to a subarchitecture of $A$ containing $A'$ and $A''$ than to each individual subarchitecture.
                	There are two possible options for covering both subgraphs on $A$ by either connecting $A'-A''$ or $A'-A'''$.
                	However, since the connection between $A'$ and $A''$ is shorter, it has to be ensured that this graph is chosen over the alternative.

                \end{example}
            
                Thus, a truly-optimal subarchitecture not only needs to cover any of the desirable subarchitectures
                but also has to allow for the shortest possible transformation between them. 
                The last piece of the puzzle to characterize optimal subarchitectures of an architecture is that merely
                considering subarchitectures still does not suffice to guarantee that the optimal solution is not
                lost.

                \begin{example}
                	Consider again the circuit in \autoref{fig:ex-circuit}. \autoref{ex:circle} showed that this circuit can be more efficiently mapped to a $5$-qubit ring than to a $4$-qubit line.
                	In \autoref{sec:shorter_connection}, this was attributed to the shorter connection between two qubits in the larger architecture.
                	A more general observation is that any arrangement of (logical) qubits on the device can potentially be permuted more efficiently on the larger architecture.
                	Similarly, in \autoref{ex:iso-opt}, the $5$-qubit architecture allows for the transformation of
                        any linear arrangement of qubits to a \mbox{\enquote{T}-shaped} arrangement, which was not possible on
                        either of it subgraphs. 

                \end{example}

                This brings us to the final characterization and theorem for optimal subarchitectures.
                Unfortunately, the precise formulation and proof of this theorem are very technical.
                For the sake of brevity, an informal version of the theorem is listed here.

                \begin{theorem}
                  \label{thm:opt-informal}
                  The $n$-qubit optimal subarchitectures· of an architecture are those that
                  allow to realize any sequence of $n$-qubit arrangements as efficiently as it could be realized on the
                  whole architecture. 
                \end{theorem}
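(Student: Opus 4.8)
The plan is to reduce circuit mapping to an abstract cost on sequences of qubit arrangements and then characterize optimality at that level. An $n$-qubit arrangement on a graph $B$ is an injective placement of the $n$ logical qubits onto vertices of $B$, and $d_B(\pi,\sigma)$ denotes the minimum number of $\swap$s transforming arrangement $\pi$ into $\sigma$ on $B$. Writing the two-qubit gates of a circuit $G$, in order, as acting on logical pairs $(a_1,b_1),\dots,(a_m,b_m)$, the first step is to establish the identity
\[
  \sopt(G,B)=\min_{\pi_0,\dots,\pi_m}\ \sum_{k=1}^m d_B(\pi_{k-1},\pi_k),
\]
where for $k\ge 1$ the arrangement $\pi_k$ places $a_k,b_k$ on adjacent vertices while $\pi_0$ is an unconstrained initial arrangement. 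This makes $\sopt(G,B)$ depend on $B$ solely through the transition costs $d_B$, so that realizing arrangement sequences as cheaply as on $A$ is exactly the right quantity to track.

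The second step treats the easy ("if") direction. Suppose $A^\ast\sqsubseteq A$ satisfies $d_{A^\ast}(\pi,\sigma)=d_A(\pi,\sigma)$ for all $n$-qubit arrangements $\pi,\sigma$, i.e.\ it realizes every transition as efficiently as the whole architecture. By the identity above, $\sopt(G,A^\ast)=\sopt(G,A)$ for every $n$-qubit circuit $G$. Since any mapping on $A^\ast$ lifts verbatim to $A$, we always have $A^\ast\leq_\text{cov}^n A$; the equality upgrades this to coverage-equivalence, so $A^\ast$ is maximal with respect to $\leq_\text{cov}^n$. The optimal subarchitectures are then precisely the smallest such maximal elements, matching the Central Problem.

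The third step is the substantive ("only if") direction. Assume $A^\ast$ fails to realize some transition optimally, say $d_{A^\ast}(\pi,\sigma)>d_A(\pi,\sigma)$. I would construct a witness circuit $G$ forcing this gap into $\sopt$, generalizing the constructions behind \autoref{thm:opt-layout} and \autoref{thm:iso-opt}. The construction uses small \emph{anchor} blocks of two-qubit gates whose only $\swap$-free execution pins the current assignment to $\pi$ (respectively $\sigma$) up to the symmetries of the block; chaining an anchor for $\pi$ with an anchor for $\sigma$ creates a forced $\pi\to\sigma$ transition. Repeating the chained block $S+1$ times, where $S$ uniformly bounds all transition costs on $A$, makes any optimal mapping follow the prescribed trajectory, so the per-transition surplus $d_{A^\ast}-d_A$ accumulates and yields $\sopt(G,A)<\sopt(G,A^\ast)$. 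Hence $A^\ast<_\text{cov}^n A$, so $A^\ast$ is not maximal and therefore not optimal, completing the characterization.

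The hard part will be the witness construction in the third step. Two points need care. First, the anchor blocks must be designed so that their unique $\swap$-free assignment really is the intended arrangement, up to unavoidable symmetry, rather than leaving room for a cheaper unintended trajectory that sidesteps the costly transition. Second, the amplification factor $S+1$ must be calibrated so that the accumulated routing surplus on $A^\ast$ strictly dominates any savings an alternative arrangement sequence on $A$ might offer. The latter relies on a uniform bound over the finite set of $n$-qubit arrangements, supplied by the diameter of $A$, guaranteeing that deviating from the forced trajectory saves at most a bounded amount per repetition while the surplus grows with each repetition.
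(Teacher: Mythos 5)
You should know at the outset that the paper never proves this theorem: it explicitly states that \enquote{the precise formulation and proof of this theorem are very technical} and only records an informal version, so there is no paper proof to match your attempt against. Judged on its own merits, your proposal contains a genuine gap, and it lies in the formalization itself, not in the admittedly hard witness construction of your third step. You model an \emph{arrangement} as a concrete injective placement of the logical qubits and reduce the theorem's condition to pairwise equality of fixed-endpoint transition costs, $d_{A^\ast}(\pi,\sigma)=d_A(\pi,\sigma)$ for placements supported on $A^\ast$. But gates never pin down a concrete placement; they only pin down an adjacency pattern up to embedding. Consequently, in your step-1 identity (which is itself fine), the minimization for $\sopt(G,A)$ ranges over trajectories through placements that may use vertices outside $A^\ast$, and pairwise equality on the $A^\ast$-supported placements does not control that richer minimization.

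This is not merely an unproven step; the \enquote{if} direction is false under your formalization. Take $A$ to be a $4$-cycle with one pendant vertex attached, $A^\ast$ the $4$-cycle, and $n=4$. One can check that every permutation of four logical qubits placed on the cycle costs exactly as many $\swap$s on $A$ as on $A^\ast$: the expensive cases (antipodal transpositions, $4$-cycles, the double transposition) already meet distance and transposition-count lower bounds on the cycle alone, so the pendant never helps and your pairwise condition holds. Yet $A$ contains an induced star $K_{1,3}$ (the pendant, the attachment vertex, and two of its cycle neighbors), which is not subgraph-isomorphic to the cycle; by \autoref{thm:iso-opt}---or directly, by alternating repeated blocks of $\cnot$s along the cycle edges with repeated blocks of $\cnot$s between one qubit and all three others---there is a $4$-qubit circuit $G$ with $\sopt(G,A)<\sopt(G,A^\ast)$, so $A^\ast$ is not coverage-maximal. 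The same embedding freedom undermines your \enquote{only if} witness: a gap $d_{A^\ast}(\pi,\sigma)>d_A(\pi,\sigma)$ at specific placements can be washed out because the optimal mapping on $A^\ast$ may realize your anchor patterns at entirely different embeddings with cheap transitions. The paper's informal statement signals the correct notion: arrangements are patterns (its \enquote{twelve non-unique ways to arrange four qubits on a line}), sequences of which must be realizable with the realization of each arrangement chosen freely, and this per-sequence choice of embeddings is precisely what your reduction to pairwise distances between fixed placements collapses---and, plausibly, why the authors call the precise formulation too technical to state.
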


                This is an extremely restrictive requirement! There are way more arrangements of $n$ qubits than there
                are $n$-qubit subarchitectures of an architecture. For example, while there is only one $4$-qubit
                \enquote{line} subarchitecture up to isomorphism, there are twelve \mbox{non-unique} ways to arrange four qubits on a
                line. 
                Any sequence of these arrangements needs to be realizable as efficiently on the potential subarchitecture than it would be on the whole architecture.
                This makes the \enquote{hunt} for optimal subarchitectures of an architecture look grim. It seems that not much
                is gained by characterizing optimal subarchitectures according to \autoref{thm:opt-informal} as it
                still does not suggest a
                method for computing optimal subarchitectures. The real insight of this theorem is that it demonstrates just how
                general optimal subarchitectures are and that qubits cannot easily be neglected for the purpose of
                reducing the search space in quantum circuit mapping.
                
                One might even be tempted to conclude that on any given architecture $A$ one cannot neglect \emph{any}
                physical qubits when mapping a quantum circuit $G$ over $n < |A|$ qubits without potentially
                excluding essential parts of the search space.
                However, this pessimism is misplaced, as the following simple example shows.

                \begin{example}
                  Consider a $5$-qubit \enquote{line} architecture. Then, the optimal mapping of any $4$-qubit quantum
                  circuit requires only four physical qubits. \autoref{thm:opt-informal} explains why: 
                  None of the twelve possible arrangements of four qubits on a line can be transformed into each other more efficiently using the additional qubit.
                \end{example}

                Computing optimal subarchitectures according to \autoref{thm:opt-informal} is an incredibly complex
                problem due to the sheer number of possible arrangements. Fortunately, the requirements of
                \autoref{thm:opt-informal} are way to strict in practice and it generally suffices to cover only the
                subarchitectures suggested by \autoref{thm:opt-layout} and/or	\autoref{thm:iso-opt}. 
                How these near-optimal sets of subarchitectures are computed, is shown next.
                
		\section{Computing near-optimal Subarchitectures}\label{sec:methods}
		
		Based on the criteria derived in the previous section, we now demonstrate two ways of computing sets of
                subarchitectures to be considered for mapping which only depend on the number of qubits of the circuit to be
                mapped. 
		As a result, for any particular architecture, these \emph{subarchitecture candidates} can be computed a-priori without any particular information about the circuits.

		\subsection{Candidates for Optimal Subarchitectures}\label{sec:opt-graphs}
		
		Assume that an $n$-qubit circuit $G$ is to be mapped to an architecture $A$ with $|A|\geq n$.
		Let $\mathit{Sub}_n(A)$ denote the set of all induced subarchitectures $A'$ of $A$ with size $n$, i.e., 
		\[\mathit{Sub}_n(A) = \{ A'\colon\; A' \sqsubseteq A \mbox{ and } |A'| = n \}. \]
		
		According to \autoref{thm:opt-layout}, this set of subarchitectures is
                probably not a set of optimal subarchitectures since the architecture might contain \enquote{better} subarchitectures. 
		Instead, for each of its elements, the corresponding set of \emph{desirable} subarchitectures should be considered.
		Thus, an initial set of subarchitecture candidates $\cand_n(A)$ is given by
		\[
			\cand_n(A) = \bigcup_{A' \in \mathit{Sub}_n(A)} \mathcal{D}(A', A).
		\]

        \begin{figure}[t]
          \centering
          \begin{minipage}[b]{.59\linewidth}
          \includegraphics[width=\linewidth]{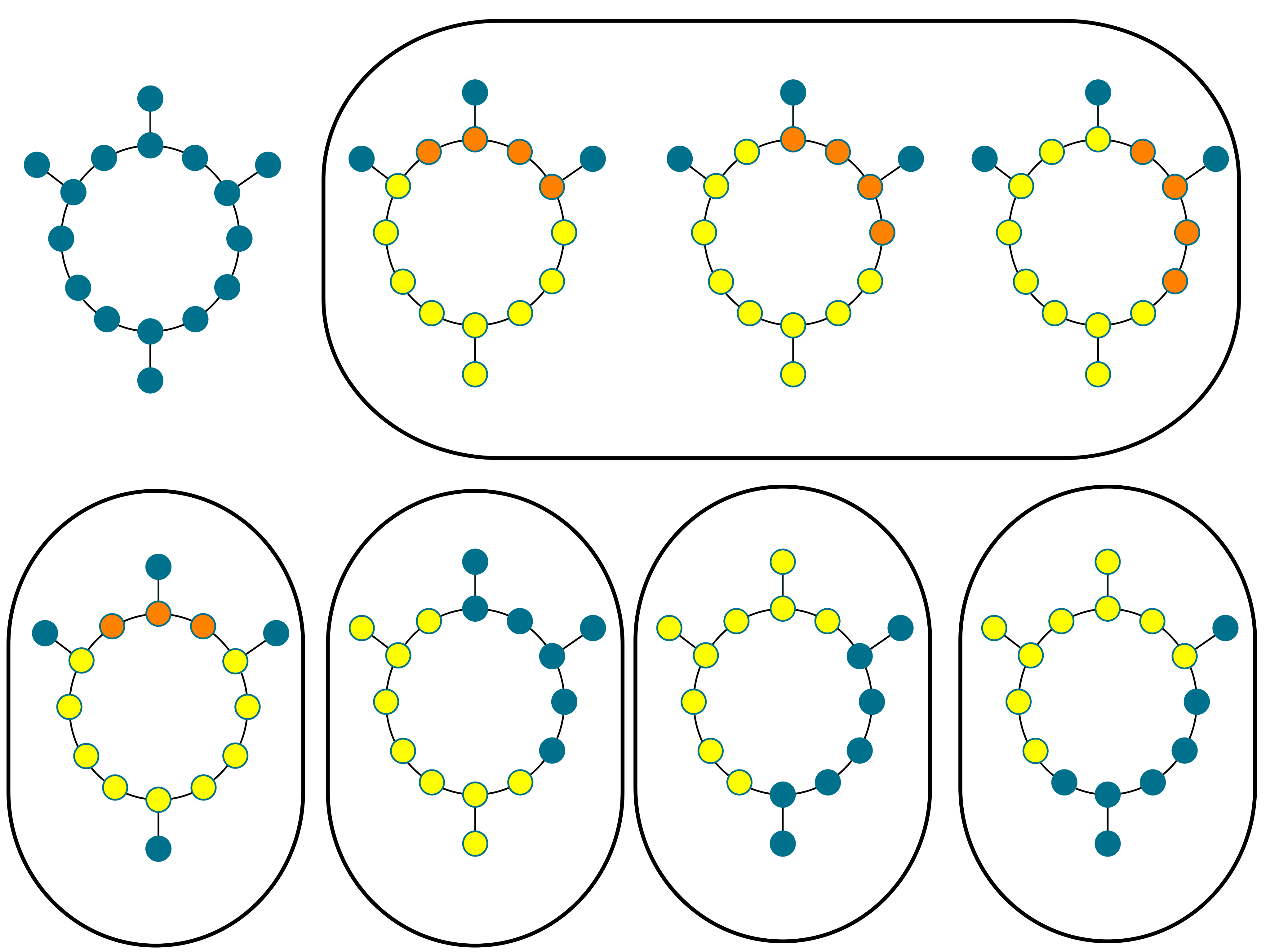}
          \caption[Subarchitecture candidates]{\emph{ibmq\_guadalupe} architecture (top left), its non-isomorphic $9$-qubit subarchitectures (marked as \tikz{\draw[fill=myyellow,line width=1pt]  circle(1ex);}) and the corresponding desirable subarchitectures (with additional qubits marked as \tikz{\draw[fill=myorange,line width=1pt]  circle(1ex);} and qubits not included in the subarchitecture marked as \tikz{\draw[fill=myblue,line width=1pt]  circle(1ex);})---forming the set of subarchitecture candidates for $9$-qubit circuits. Multiple subarchitectures may lead to the same candidate.}
          \label{fig:ibm-guadalupe-candidates}
          \end{minipage}\hspace*{3mm}
          \begin{minipage}[b]{.39\linewidth}
          \centering
          \includegraphics[width=\linewidth]{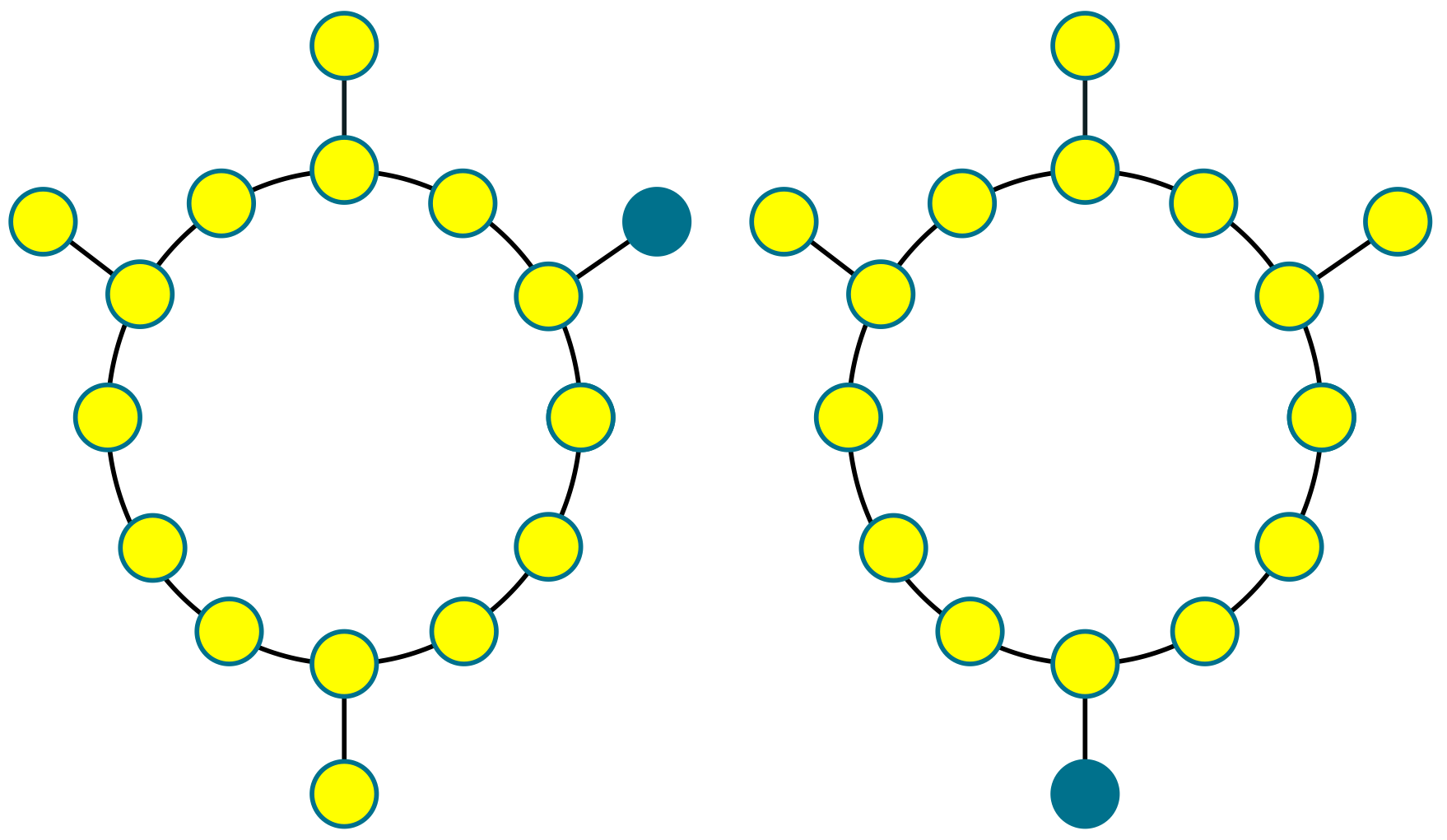}\vspace*{1.5cm}
          \caption[Optimal subgraph candidates]{Optimal subarchitecture candidates (marked as \tikz{\draw[fill=myyellow,line width=1pt]  circle(1ex);}) for $9$-qubit circuits on the \emph{ibmq\_guadalupe} architecture.}
          \label{fig:ibm-guadalupe-minlayout}
        \end{minipage}
        \end{figure}
                
        \begin{example}
          Consider the $16$-qubit \emph{ibmq\_guadalupe} architecture shown on
          the top left of \autoref{fig:ibm-guadalupe-candidates} and assume that a $9$-qubit circuit shall be mapped to
          this architecture. 
          Then, \autoref{fig:ibm-guadalupe-candidates} shows the architecture's seven non-isomorphic $9$-qubit subarchitectures (marked as \tikz{\draw[fill=myyellow,line width=1pt]  circle(1ex);}).
          On this architecture, the only way to decrease the distance between two qubits is to close the central ring of the architecture.
          In four out of seven cases, closing the ring indeed shortens the path between two qubits (marked as \tikz{\draw[fill=myorange,line width=1pt]  circle(1ex);}).
          Overall, the set of mapping candidates for $9$-qubit circuits $\cand_9(\mathit{ibmq\_quadalupe})$ consists of five subarchitectures ranging from nine to thirteen qubits.			
        \end{example}

        According to \autoref{thm:iso-opt}, if a \mbox{subarchitecture} subsumes two
        non-isomorphic smaller subarchitectures, then this larger subarchitecture needs to be considered in order to not lose
        out on optimality. 
        Thus, the set of \emph{optimal} subarchitecture candidates $\mathit{OptCand}_n(A)$ is given by the smallest
        subarchitectures of $A$ (with respect to $\sqsubseteq$) that contain \emph{all} subarchitecture candidates for $n$
        qubits, i.e.,   
        \[
          \mathit{OptCand}_n(A) = \min_{\sqsubseteq} \left( \underbrace{\bigcap_{C \in \cand_n(A)} \overbrace{\{A'\colon\; C \sqsubseteq A' \sqsubseteq A\}}^{\mbox{\scriptsize subarchitectures of $A$ containing $C$}}}_{\mbox{\scriptsize intersection over all subarchitecture candidates}} \right)
        \]        	\vspace*{2mm}

        This set is never empty, since, in the worst case, it consists of $A$ itself (the entire
        architecture, per definition, has all subarchitecture candidates as subarchitectures). 
        Although, in many cases, the elements of $\mathit{OptCand}_n(A)$ contain less qubits. Note
        that these candidates are optimal in the sense that no
        larger subarchitectures can be constructed using  \autoref{thm:opt-layout} and \autoref{thm:iso-opt} that are an improvement with respect to $\leq_\text{cov}^n$.

        \begin{example}
          Consider again the ibmq\_guadalupe architecture from the previous example.
          Then, \autoref{fig:ibm-guadalupe-minlayout} shows the optimal subarchitecture candidates for $9$-qubit circuits.
          It is easy to check, that both indeed contain all five subarchitecture candidates illustrated in
          \autoref{fig:ibm-guadalupe-candidates} and that removing any of their qubits would exclude one of them. 
          While, in this case, only a single qubit is saved compared to the whole architecture, this still constitutes a
          significant reduction in the search space (e.g., by a factor of $16$ due to only having to consider $15!$
          instead of $16!$ possible permutations). 
        \end{example}

                By utilizing the derived subarchitectures to map quantum circuits, the search space can be greatly reduced without cutting away essential parts.

        \subsection{Adaptively Covering Subarchitecture Candidates}\label{sec:covering}

        Although $\mathit{OptCand}_n(A)$ describes exactly those subarchitectures of an architecture $A$ that are
        optimal with respect to \autoref{thm:opt-layout} and \autoref{thm:iso-opt}, the requirement that these
        subarchitectures must contain \emph{all} subarchitecture candidates is a very strong one.
        Since the resulting subarchitectures need to encompass many \mbox{non-isomorphic} subarchitectures, they frequently are large in size compared to the circuit to be mapped. 
        \autoref{thm:iso-opt} implies that this requirement cannot be dropped without potentially cutting away parts of the search space potentially containing optimal solutions.
        However, the structure of the circuits in the proof of \autoref{thm:iso-opt} is still rather
        \enquote{artificial} since converting between different $n$-qubit subarchitectures in a single circuit is hardly relevant
        in practice. Thus, this criterion might be deliberately dropped in order to reduce the size of the
        \mbox{subarchitectures} to be considered---leaving only the subarchitecture candidates for mapping. 

        The number of subarchitecture candidates $\cand_n(A)$ is, in general, bounded.
        In particular, if a architecture $A$ with $n$ qubits has diameter $d$, at most $\frac{n(n-1)}{2} (d-1)$ qubits can be added to improve the connectivity of the circuit.
        As architectures for currently existing devices exhibit low connectivity, the number of qubits that can be added to improve the connectivity of an architecture tends to be quite small.
        However, the sheer number of candidates can grow rather large and trying out every possible one can become exceedingly expensive---even if taking parallel execution into account.
        Thus, it might be desirable to find a collection of \mbox{subarchitectures} (with a limited number of elements) that covers \emph{all} the \mbox{subarchitecture} candidates.

        \begin{definition}
          Let $A$ be an architecture and let $\cand_n(A)$ denote the set of subarchitecture candidates for $n$-qubit circuits. 
          Then, the set of covering candidates $\mathit{CovCand}_n(A)$ is given by the \mbox{subarchitectures} of $A$ containing any of the subarchitecture candidates, i.e.,
\vspace*{1mm}          \[
            \mathit{CovCand}_n(A) = \bigcup_{C \in \cand_n(A)} \{A'\colon\; C \sqsubseteq A' \sqsubseteq A\}
          \] \vspace*{1mm}
          Moreover, a \emph{subarchitecture covering} $\mathit{Cov}_n(A)$ is a subset of $\mathit{CovCand}_n(A)$ with
          the property that for every subarchitecture candidate, there is at least one element in the covering that
          contains that candidate as a subarchitecture, i.e., $\mathit{Cov}_n(A) \subseteq \mathit{CovCand}_n(A)$ such
          that \vspace*{1mm}
          \[
            \forall\, C\in\cand_n(A)\; \exists\, A' \in \mathit{Cov}_n(A)\colon\; C\sqsubseteq A'.
          \]
        \end{definition}

          \begin{minipage}{.6\linewidth}
          \centering         
                \begin{algorithm}[H]
          \caption{Compute subarchitecture covering}\label{alg:covering}
          \begin{algorithmic}
            \Input
          	\Desc{$A$}{Architecture}
          	\Desc{$n$}{Number of qubits}
          	\Desc{$k$}{Max. size of covering}
          	\EndInput
          	\Output
          	\Desc{$\mathit{Cov}_n(A)$}{Covering of $A$ such that $|\mathit{Cov}_n(A)| \leq k$}
          	\EndOutput\\\hrulefill
            \State Initialize $\cov_n(A) \leftarrow \cand_n(A)$
            \State Initialize $\text{queue} \leftarrow \bigcup_{C \in \cand_n(A)}\{A'\colon\; C \sqsubseteq A' \sqsubseteq A\}$
            \State Sort queue with respect to the number of vertices.
            \While {$|\cov_n(A)| > k$} 
            \State $D \leftarrow \operatorname{pop}(\text{queue})$
            \State Let $\cov_D \leftarrow \{C \in \cov_n(A)\colon\; C \sqsubseteq D\}$
            \If {$|\cov_D| > 1$}
            \State Update $\cov_n(A) \leftarrow (\cov_n(A) \setminus \cov_D) \cup \{D\}$
            \EndIf
            \EndWhile
          \end{algorithmic}
        \end{algorithm}
      \end{minipage}\hspace*{.5cm}
      \begin{minipage}{.36\linewidth}
                \begin{figure}[H]
          \centering
          \begin{subfigure}[b]{\linewidth}
          \centering
            \includegraphics[width=0.5\linewidth]{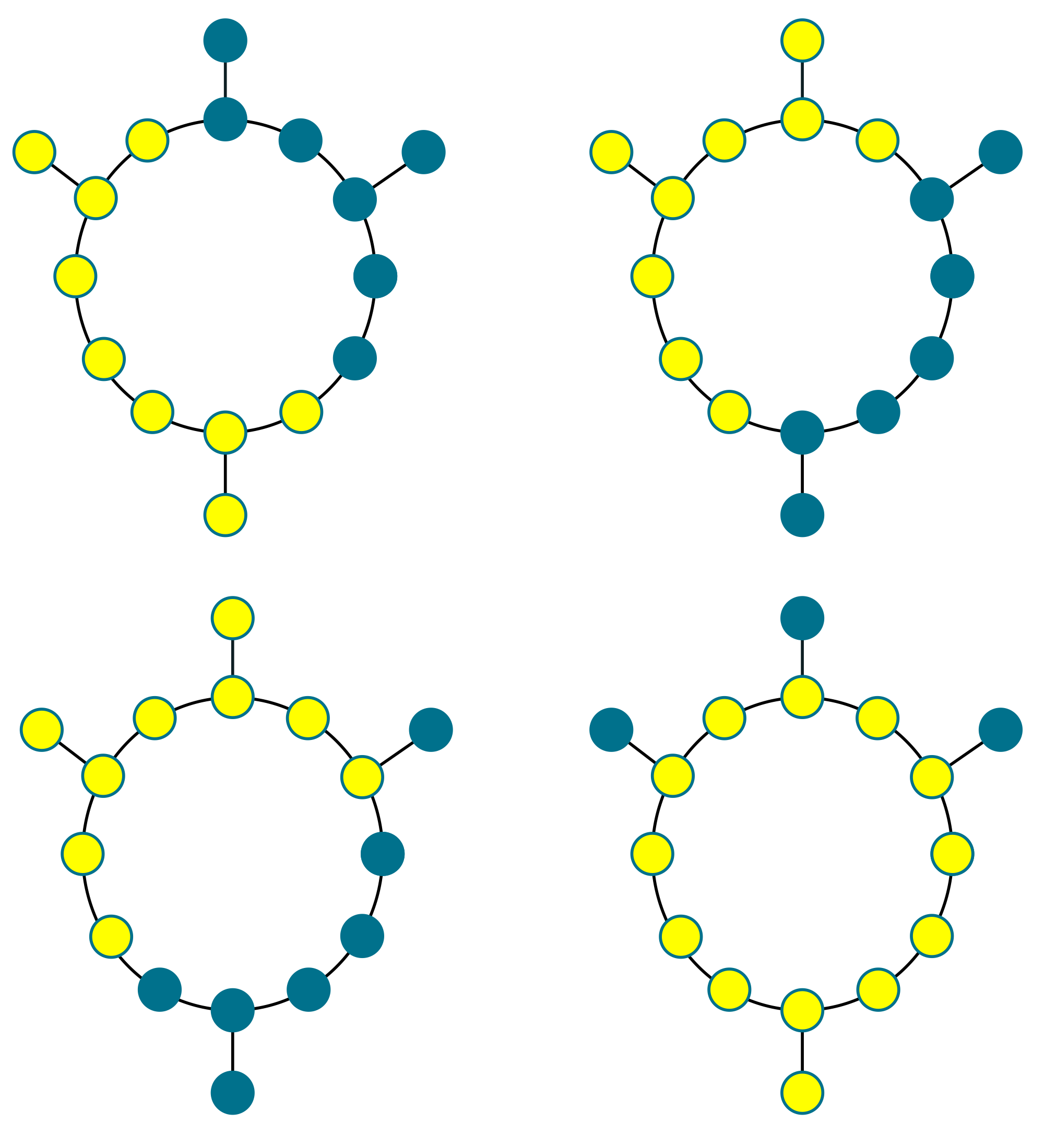}
            \caption{$4$-element covering}\label{fig:covering_2}
          \end{subfigure}
          
          \begin{subfigure}[b]{\linewidth}
          \centering
            \includegraphics[width=0.23\linewidth]{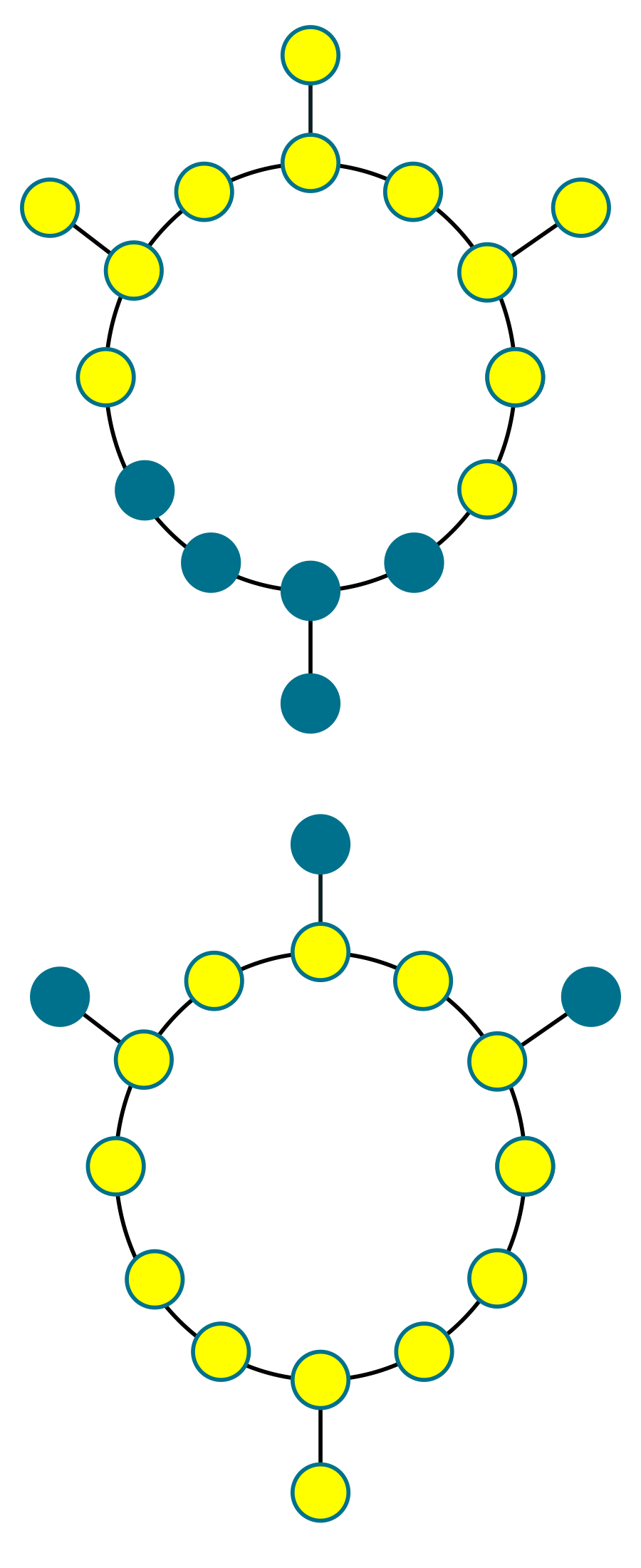}
            \caption{$2$-element covering}\label{fig:covering_1}
          \end{subfigure}
          \caption[Subarchitecture coverings]{Different subarchitecture coverings (marked as \tikz{\draw[fill=myyellow,line width=1pt]  circle(1ex);}) for $9$-qubit circuits on the \emph{ibmq\_guadalupe} architecture.}
          \label{fig:ibm-guadalupe-covering}\vspace*{3mm}
        \end{figure}

    \end{minipage}
      
        Finding a minimal subarchitecture covering (with respect to the number of elements) is actually very easy, since the
        entire architecture, per definition, covers all subarchitecture candidates---at the disadvantage of having the largest
        number of qubits. 
        In contrast, the set of subarchitecture candidates trivially forms the maximal covering---having the least
        number of qubits but the highest number of candidates. 
        In order to determine a subarchitecture covering that is as small as possible while still containing
        subarchitectures over as few qubits as possible, we propose the greedy algorithm as sketched in \autoref{alg:covering}.

        The algorithm starts off with the subarchitecture candidates as a covering and then iteratively picks a covering candidate (from smallest to largest).
        If the candidate covers more than a single architecture in the currently considered covering, the candidate replaces all covered architectures.
        This process is repeated until the size of the covering has been reduced to the desired size.
        This algorithm is guaranteed to terminate since, in the worst case, the entire architecture is returned.

        The bottleneck of this algorithm is the computation of the non-isomorphic subarchitectures, as it requires solving many instances of the (sub-)graph isomorphism problem. During the computation of the non-isomorphic subarchitectures, the set $\cand_n(A)$ can be constructed on the fly. Because the ordering $\prec$ is dependent on the sub-graph isomorphism relation, it can be deduced during the sub-graph isomorphism check of two subarchitectures. If it is determined that $A' \sqsubseteq A''$, then it is simply a matter of determining whether there are two nodes in $A''$ that are closer in $A''$ than in $A'$ to determine whether $A'' \prec A'$. The set $\mathcal{D}(A', A)$ can then be computed as the transitive closure.

        \begin{example}
          Once more, consider the \emph{ibmq\_guadalupe} architecture from the previous examples.
          Then, \autoref{fig:ibm-guadalupe-covering} shows a four- and a two-element subarchitecture covering for $9$-qubit circuits as determined by \autoref{alg:covering}.        		
          While the four-element covering shown in \autoref{fig:covering_2} contains smaller architectures (specifically, three
          $9$-qubit, and one $13$-qubit architecture), \autoref{fig:covering_1} shows that all three $9$-qubit architectures can be
          covered by a single $11$-qubit subarchitecture. 
        \end{example}
        
        Overall, \autoref{alg:covering} allows one to adaptively cover all the subarchitecture candidates---offering a trade-off
        between the number of subarchitectures to consider and their respective size. 
        The computed covering ensures that all practically relevant parts of the search space (i.e., excluding the
        pathological cases covered by \autoref{thm:iso-opt} and \autoref{thm:opt-informal}) are considered during the subsequent mapping. 
        
        \section{Resulting Tool}\label{sec:experiments}

      All of the above findings have been used to develop a Python-based tool that computes subarchitectures with a high coverage
      for a given architecture and size. In particular, this tool computes optimal subarchitecture candidates
      as discussed in \autoref{sec:opt-graphs} and minimal subarchitecture coverings using \autoref{alg:covering}.
      The tool is integrated into the quantum circuit mapping tool QMAP~(\mbox{\url{https://github.com/cda-tum/qmap}}), which is part of the \emph{Munich Quantum Toolkit} (MQT)~\cite{willeMQTQMAPEfficient2023}.
      All graph computations are performed using the retworkx
      library~\cite{treinishRetworkxHighperformanceGraph2022} which uses the
      VF2 algorithm~\cite{cordellaSubGraphIsomorphism2004} to solve the (sub-)graph isomorphism
      problem. Since the subgraph isomorphism problem is NP-hard~\cite{cookComplexityTheoremprovingProcedures1971},
      \autoref{alg:covering} is inherently an exponential algorithm (at least as long as the question of P and NP is not
      settled). To compute all non-isomorphic subarchitectures of an
      architecture, one needs to first compute all subarchitectures of which there are an exponential number with
      respect to the size of the whole architecture.
      To demonstrate the runtime of computing optimal subarchitectures and the effect of mapping to subarchitectures, experimental evaluations were conducted on a \SI{3.6}{\giga\hertz} Intel Xeon W-1370P machine running Ubuntu 20.04 with \SI{128}{\gibi\byte} of main memory. Each evaluation's time limit was set to 24 hours. 
      The current version of the tool performs all computations using a single thread. 
	  In principle, all non-isomorphic subarchitectures can be computed in parallel, which could accelerate the computation of optimal subarchitectures for larger architectures. 
	  In the sections that follow, we will show how the tool can be used to analyze state-of-the-art quantum computing architectures.
      \subsection{Optimal Subarchitectures and Coverings}

      The resulting tool was used to compute optimal subarchitecture candidates and minimal subarchitecture coverings for three
      representative quantum computing architectures: the $16$-qubit \emph{ibmq\_guadalupe} architecture, another $16$-qubit architecture consisting of two connected $8$-qubit rings that serve as the foundation of Rigetti's quantum computing architectures, and a $23$-qubit
      part of Google's Sycamore chip. More precisely, \autoref{tab:subarchitectures} lists the
      total number of connected subarchitectures and non-isomorphic subarchitectures for the three architectures,
      as well as the optimal subarchitecture candidates and minimal subarchitecture coverings for two different numbers
      of qubits for each device.

      First, results confirm the findings illustrated before concerning the ibmq\_guadalupe architecture. Since it is
      based on a $12$-qubit ring, the optimal candidates from $8$ qubits 
      onward are all at least of size $12$ because that is the point at which there are shorter connections between
      qubits on the ring. Therefore it is quite hard to reduce the number of qubits of the considered subarchitectures
      during mapping on the \emph{ibmq\_guadalupe}. However, because the search space in quantum circuit mapping is
      exponential in the number of allocated qubits, every qubit saved by using \autoref{alg:covering} significantly
      reduces the complexity of the mapping problem.

      Things get more interesting with the $16$-qubit Rigetti architecture which has a nice symmetry to it. The two
      central elements of this architecture are the $4$-qubit ring in the middle and the $8$-qubit rings attached to
      it. This is reflected in the covering graphs shown in \autoref{tab:subarchitectures}, which can be roughly
      separated into subgraphs containing the $4$-qubit ring and subgraphs containing the $8$-qubit ring. 

                      \begin{table*}[t]
                  \caption{Example subarchitectures}
          \label{tab:subarchitectures}
          \centering
          \resizebox{0.98\linewidth}{!}{ \tiny
          \begin{tabular}{m{.08\linewidth} m{0.2cm} >{\centering}m{1.6cm} >{\centering}m{1.6cm} m{.08\linewidth}
            m{.08\linewidth}  m{.08\linewidth} m{.08\linewidth}}
            \toprule
            Architecture & $|A|$ & Connected Subarchitectures & Non-isomorphic Subarchitectures  &  \multicolumn{2}{c}{Optimal Candidates} &
                                                                                                              \multicolumn{2}{c}{Coverings}
            \\ \midrule
            \multicolumn{4}{l}{ibmq\_guadalupe} & 8 Qubits & 10 Qubits & 8 Qubits & 10 Qubits\\
               \begin{minipage}{\linewidth}
                 \includegraphics[width=0.9\linewidth]{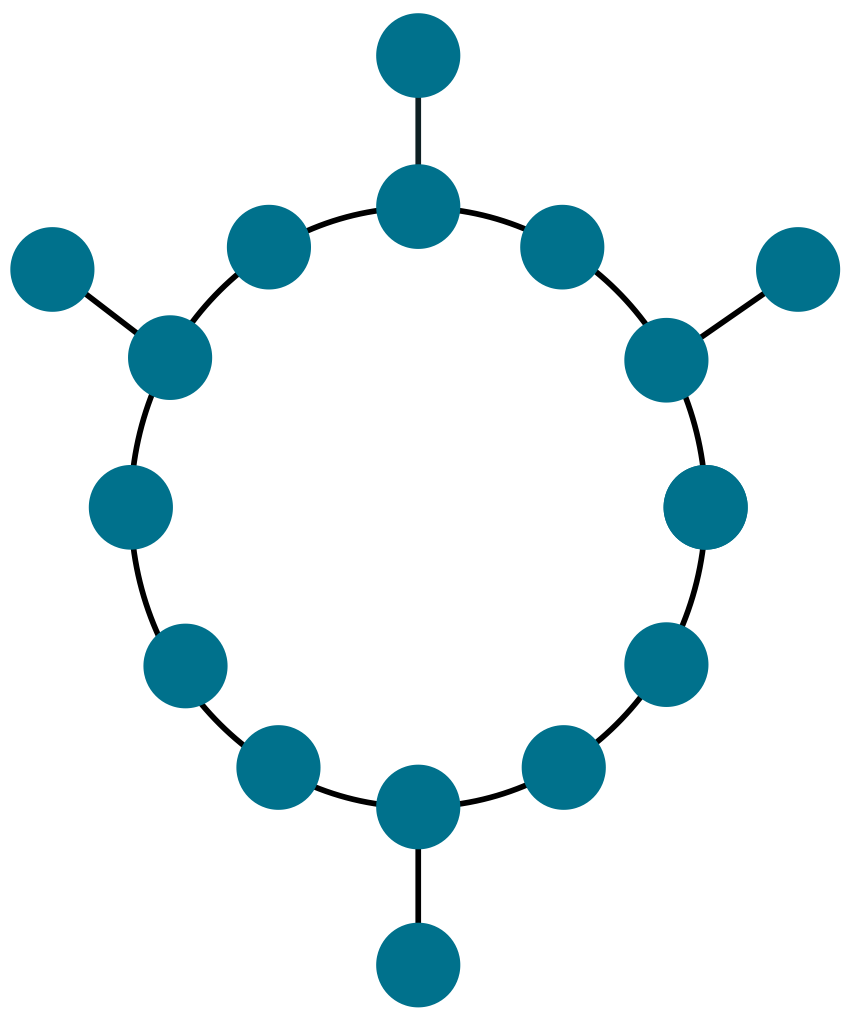}
                                   \end{minipage}& 16 & $746$ & $110$ & \begin{minipage}{\linewidth}
                                   \includegraphics[width=0.9\linewidth]{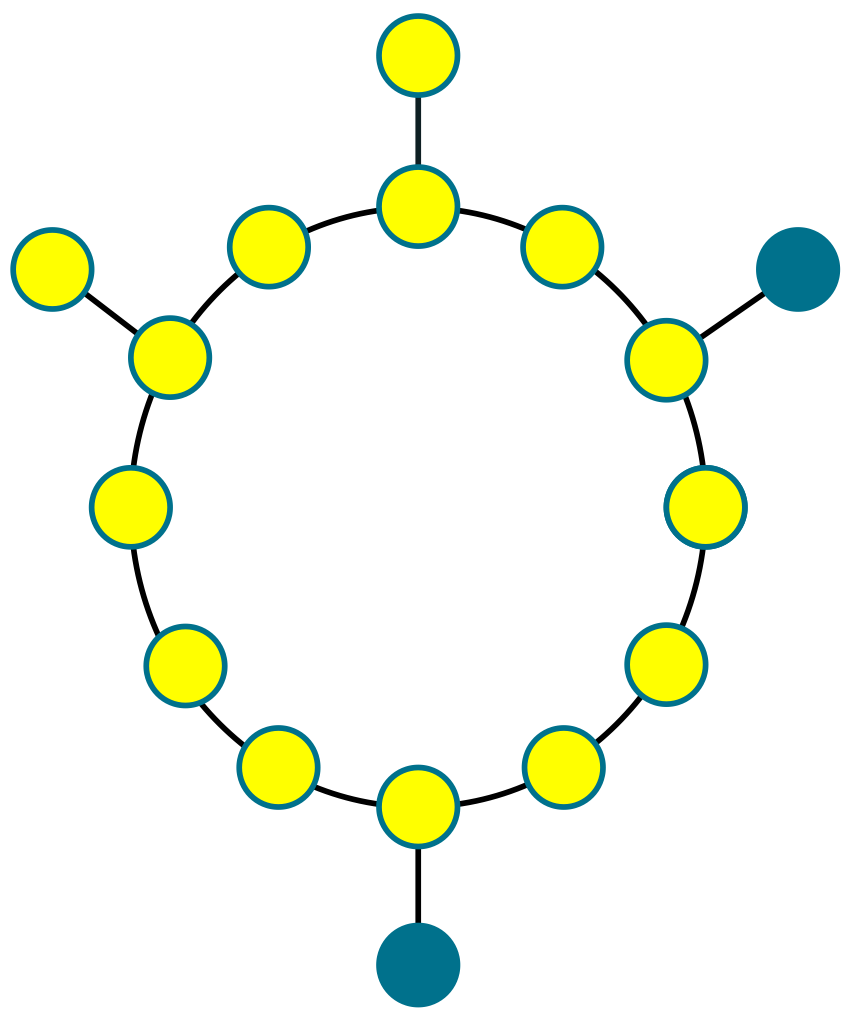}
                                 \end{minipage}            &
               \begin{minipage}{\linewidth}
                                   \includegraphics[width=0.9\linewidth]{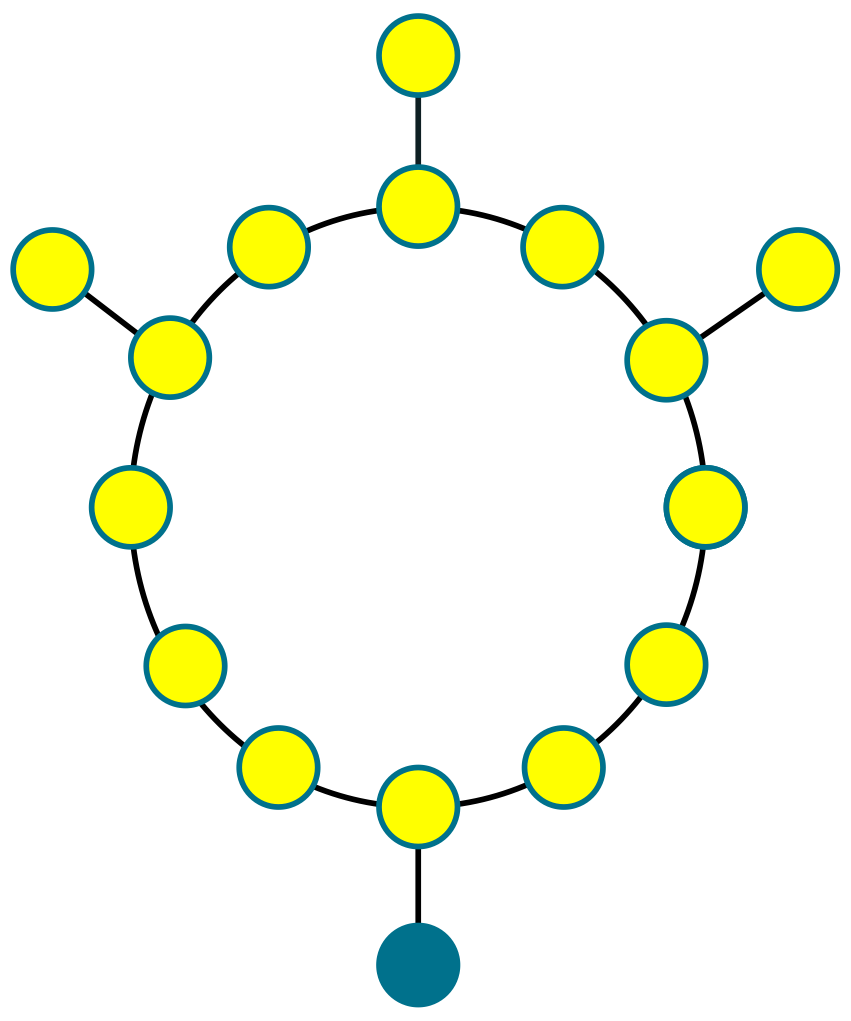}
                                 \end{minipage}&
                                                 \begin{minipage}{\linewidth}
                                                   \vspace*{0.25cm}
                                   \includegraphics[width=0.99\linewidth]{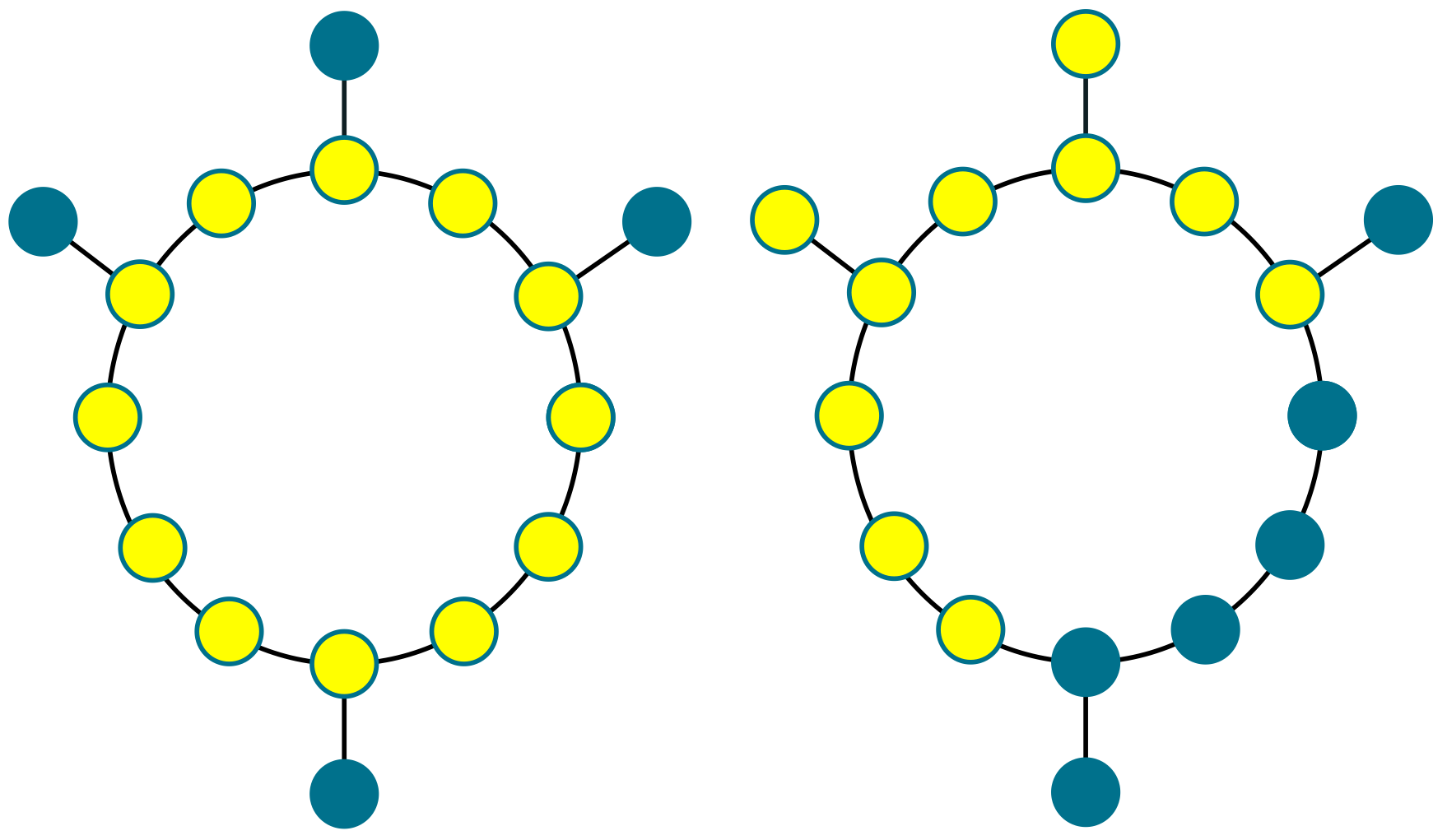}
                                   \end{minipage}
&
                                 \begin{minipage}{\linewidth}
                                   \includegraphics[width=0.99\linewidth]{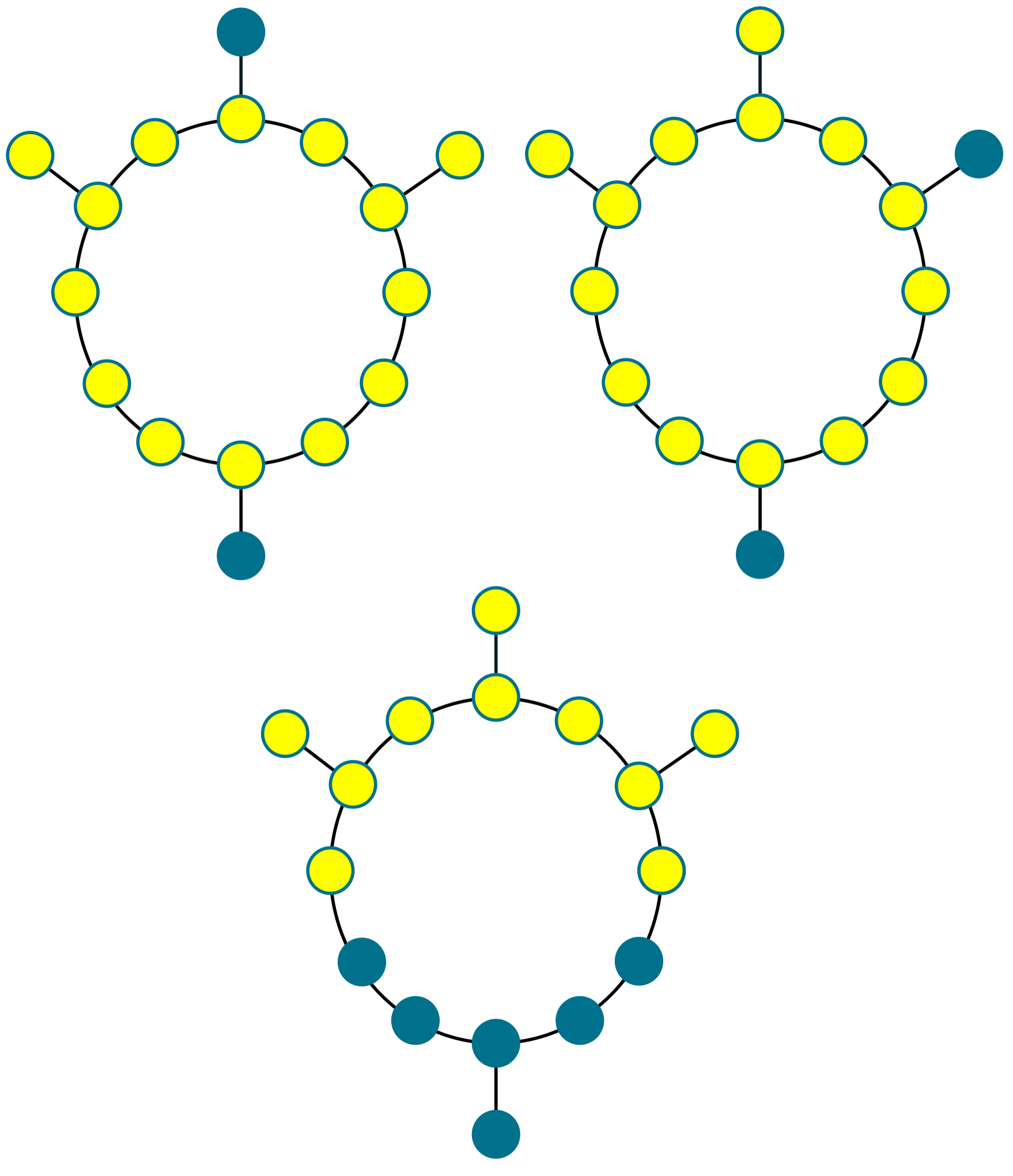}
                                 \end{minipage}\\\midrule
            \multicolumn{4}{l}{Rigetti} & 10 Qubits & 7 Qubits & 10 Qubits & 7 Qubits\\
               \begin{minipage}{\linewidth}
                 \includegraphics[width=0.9\linewidth]{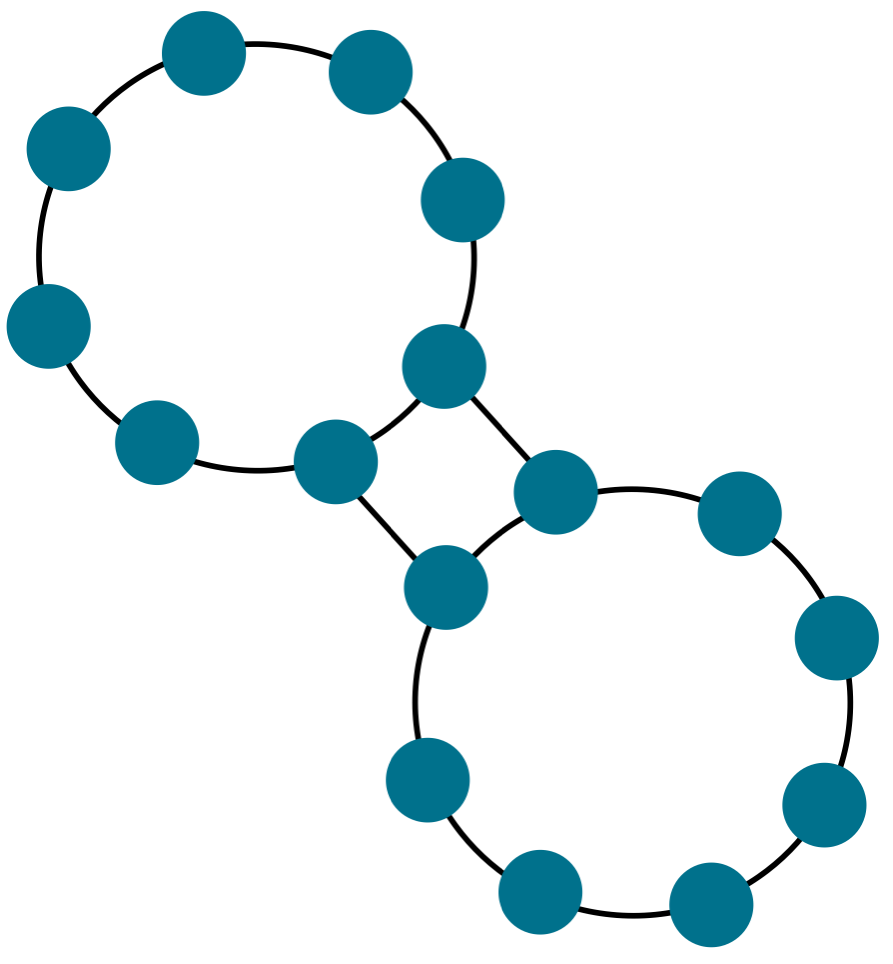}
                                   \end{minipage}& 16 & $1312$ & $184$ & \begin{minipage}{\linewidth}
                                   \includegraphics[width=0.9\linewidth]{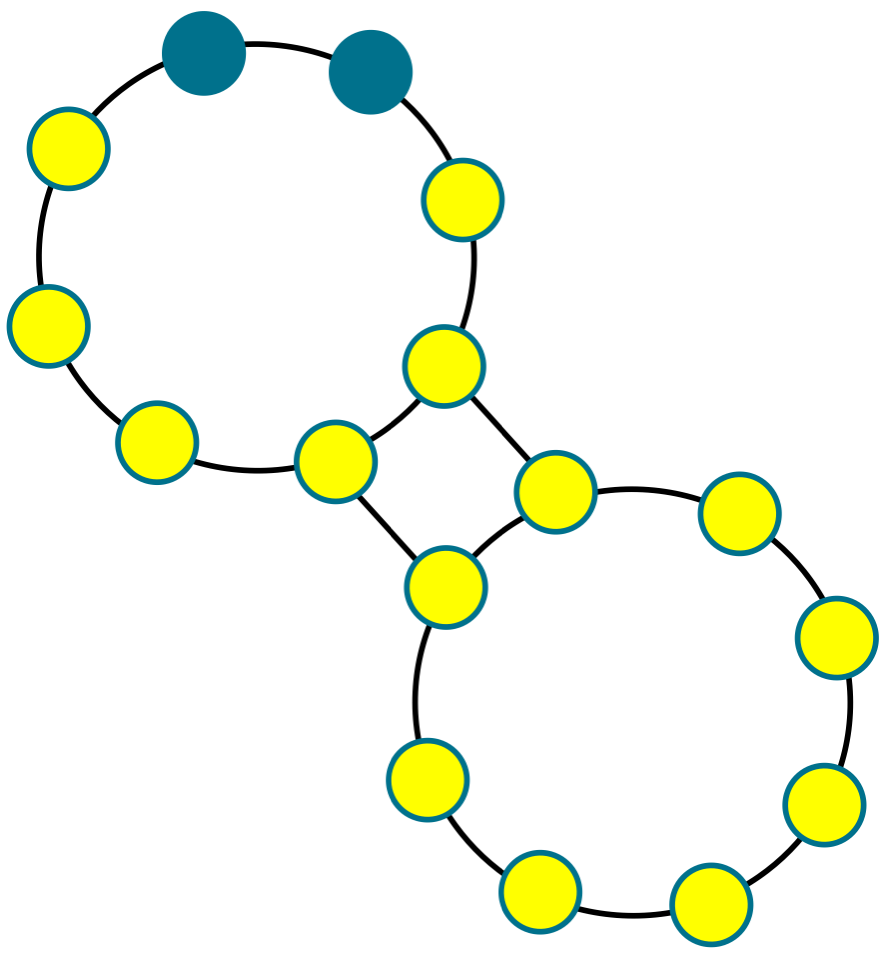}
                                 \end{minipage}            &
               \begin{minipage}{\linewidth}
                                   \includegraphics[width=0.9\linewidth]{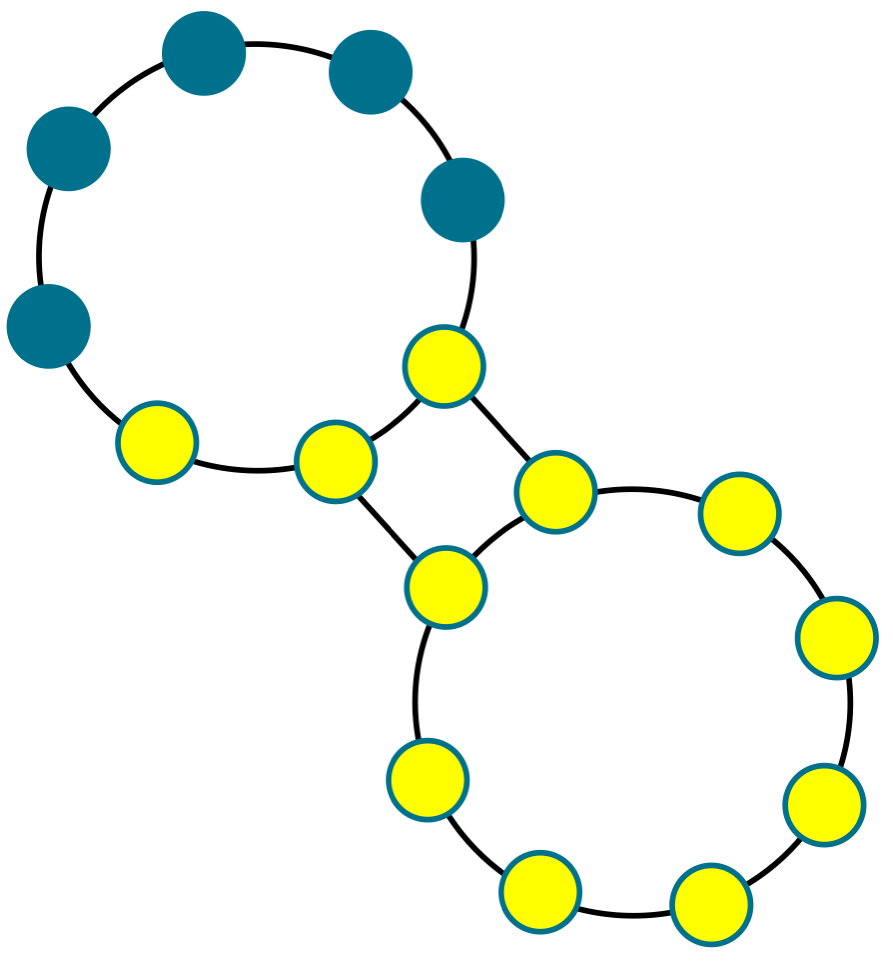}
                                 \end{minipage}&
                                 \begin{minipage}{\linewidth}
                                   \includegraphics[width=0.99\linewidth]{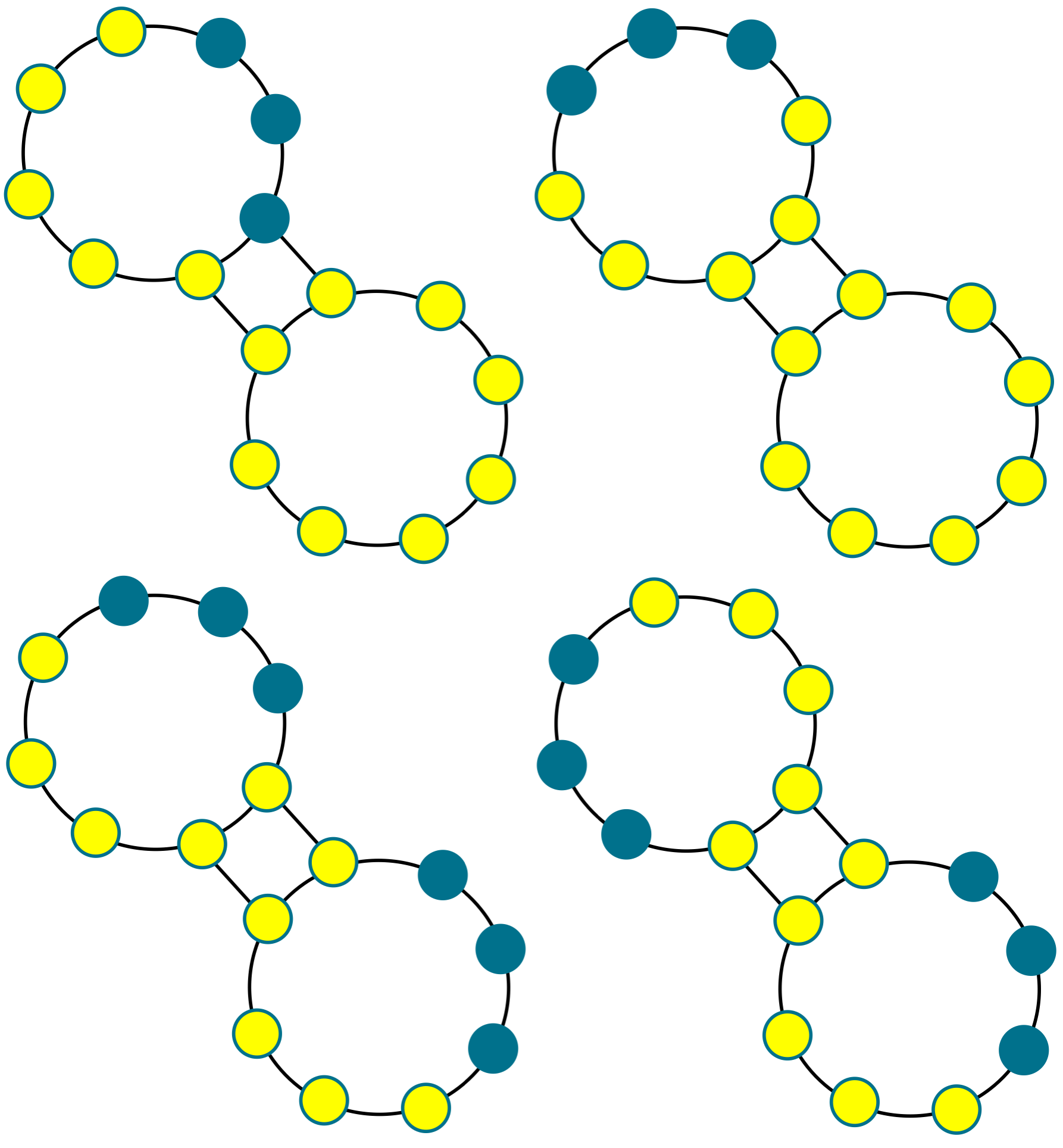}
                                   \end{minipage}
&
                                                    \begin{minipage}{\linewidth}
                                                      \vspace*{0.2cm}
                                   \includegraphics[width=0.99\linewidth]{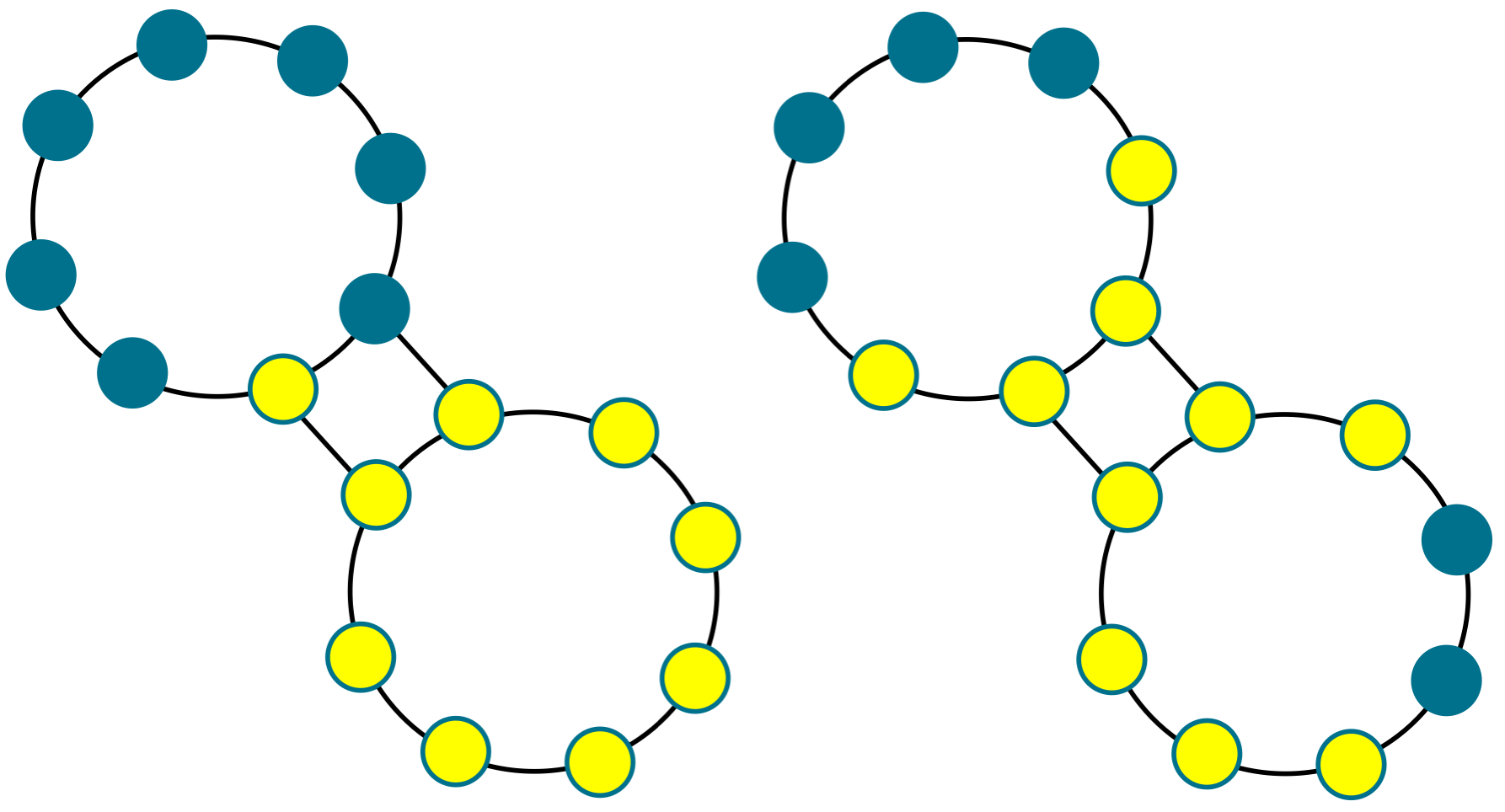}
                                 \end{minipage}\\ \midrule
            \multicolumn{4}{l}{Sycamore} & 7 Qubits & 13 Qubits & 7 Qubits & 13 Qubits\\
            \begin{minipage}{\linewidth}
              \includegraphics[width=0.9\linewidth]{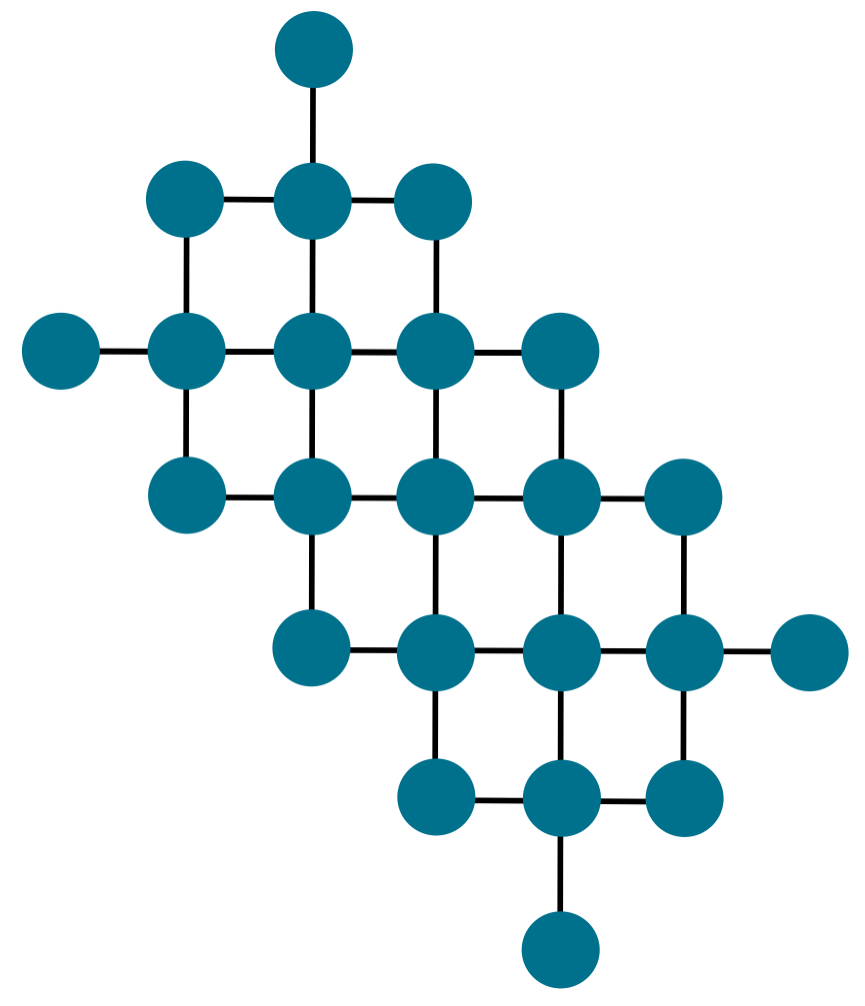}
            \end{minipage}& 23 & $300015$ & $24786$ & \begin{minipage}{\linewidth}
                                   \includegraphics[width=0.9\linewidth]{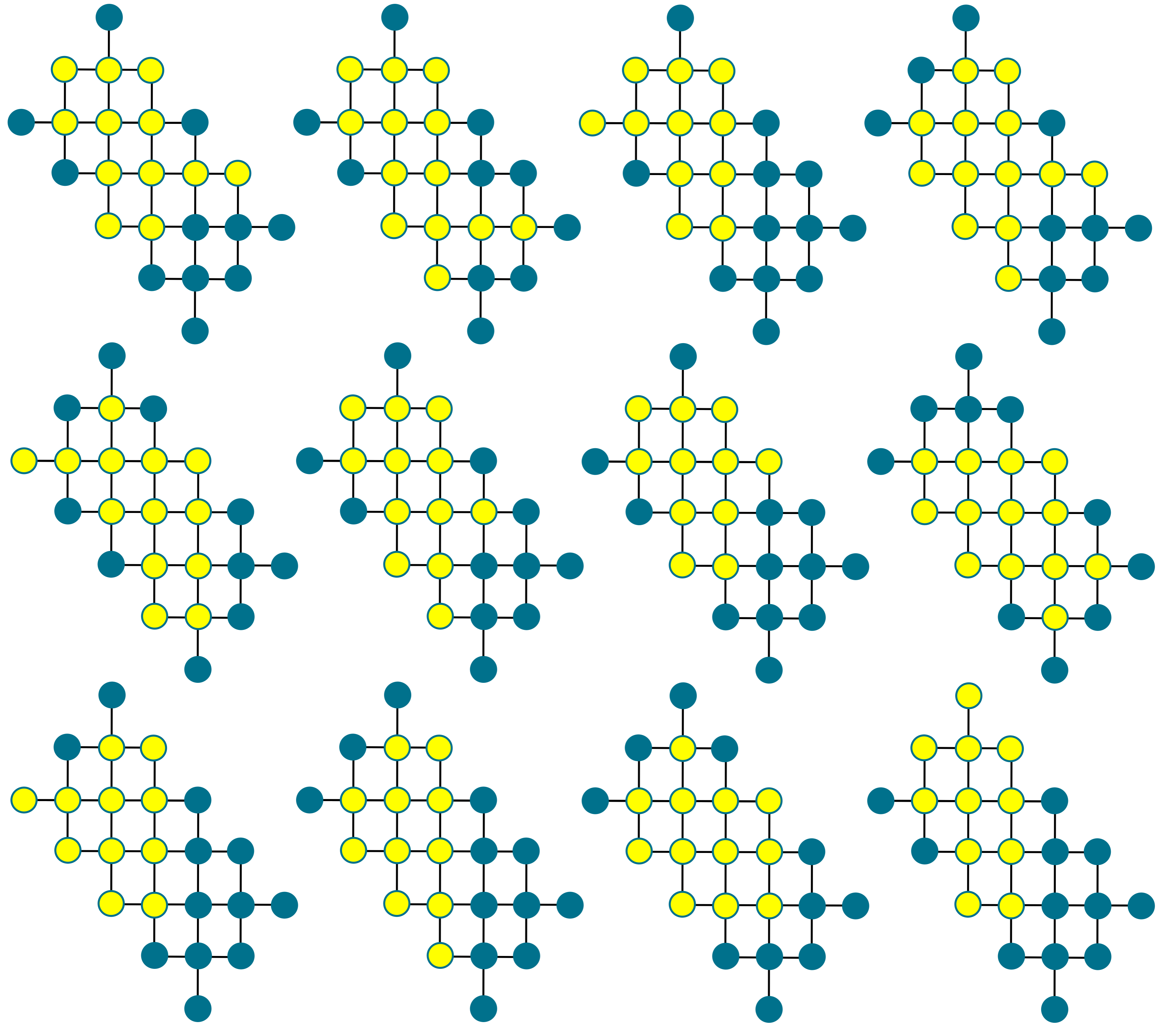}
                                 \end{minipage}            &
               \begin{minipage}{\linewidth}
                                   \includegraphics[width=0.9\linewidth]{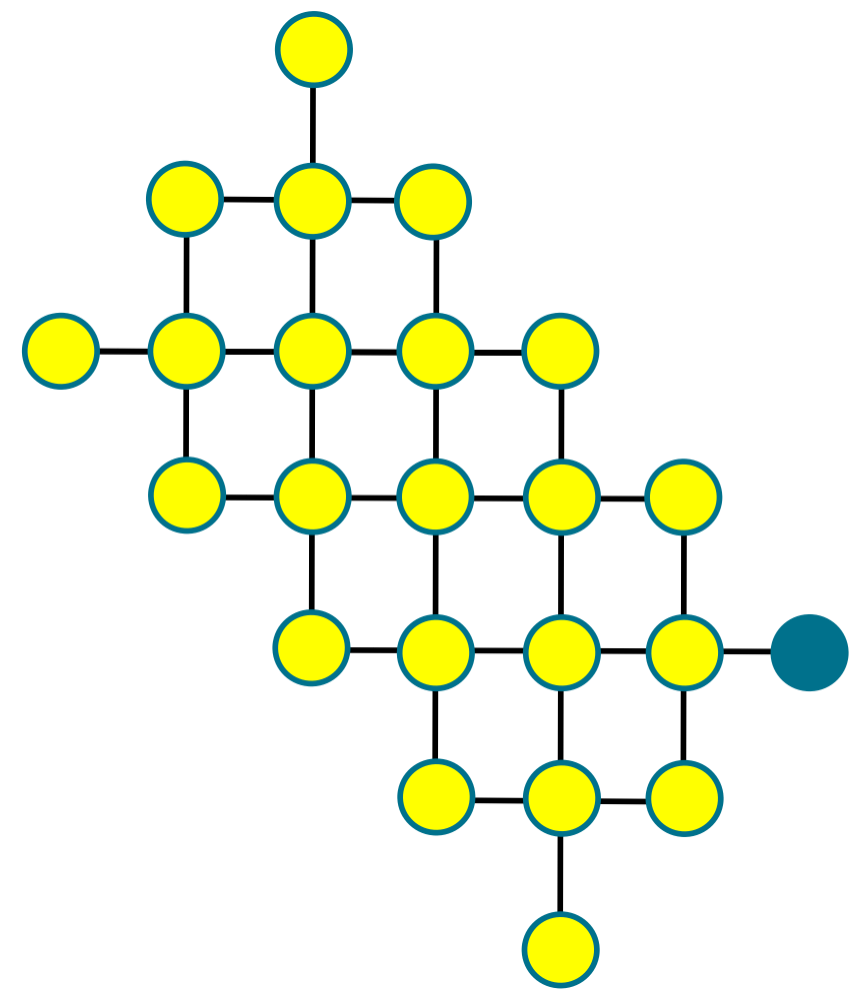}
                                 \end{minipage}&
                                 \begin{minipage}{\linewidth}
                                   \includegraphics[width=0.9\linewidth]{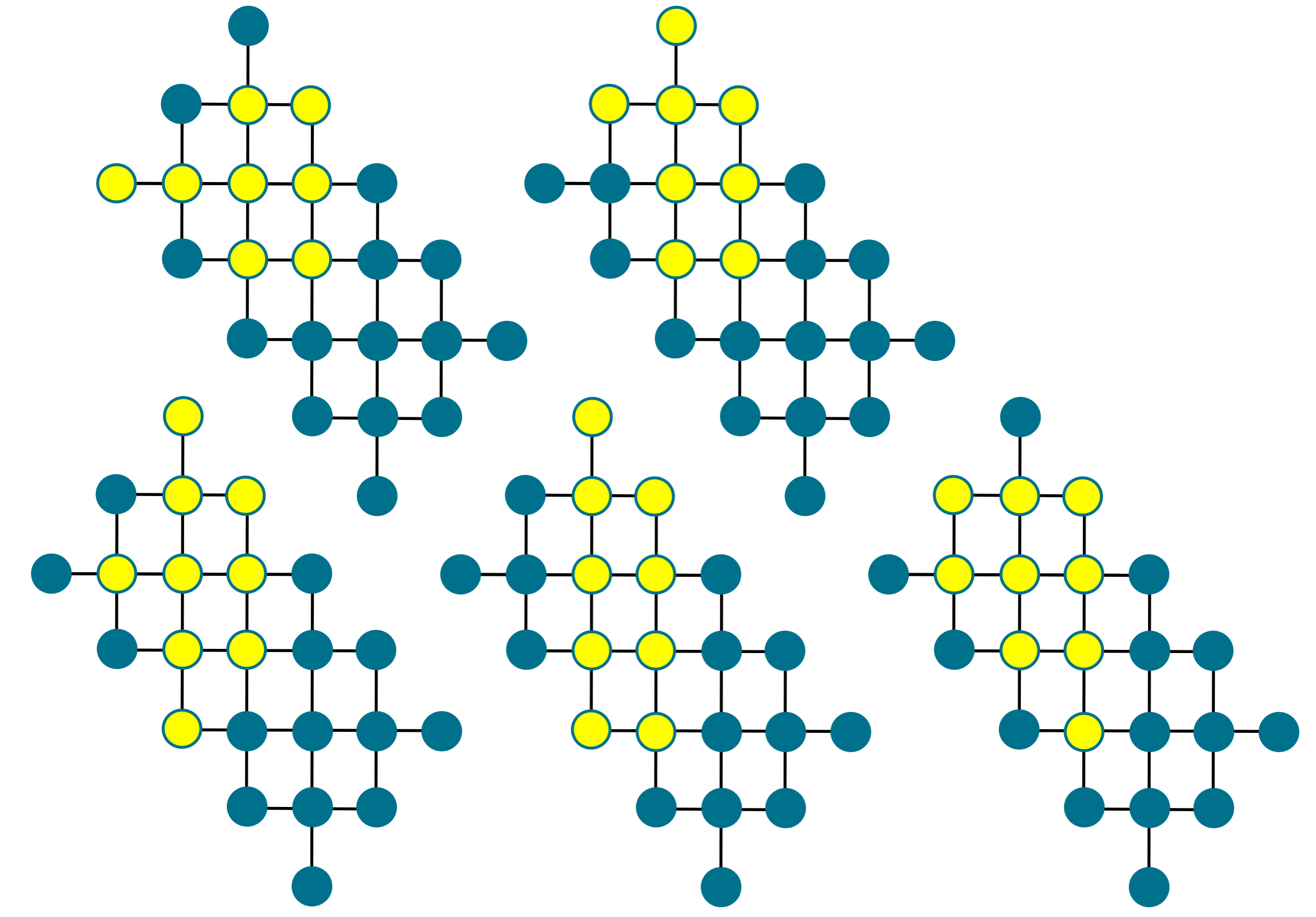}
                                   \end{minipage}
&
                                 \begin{minipage}{\linewidth}
                                   \begin{tikzpicture}[overlay]
                                     \node[] at (0.8,-0.1) {$\mathbf{\vdots} ~ +95$};
                                   \end{tikzpicture}
                                   \includegraphics[width=0.9\linewidth]{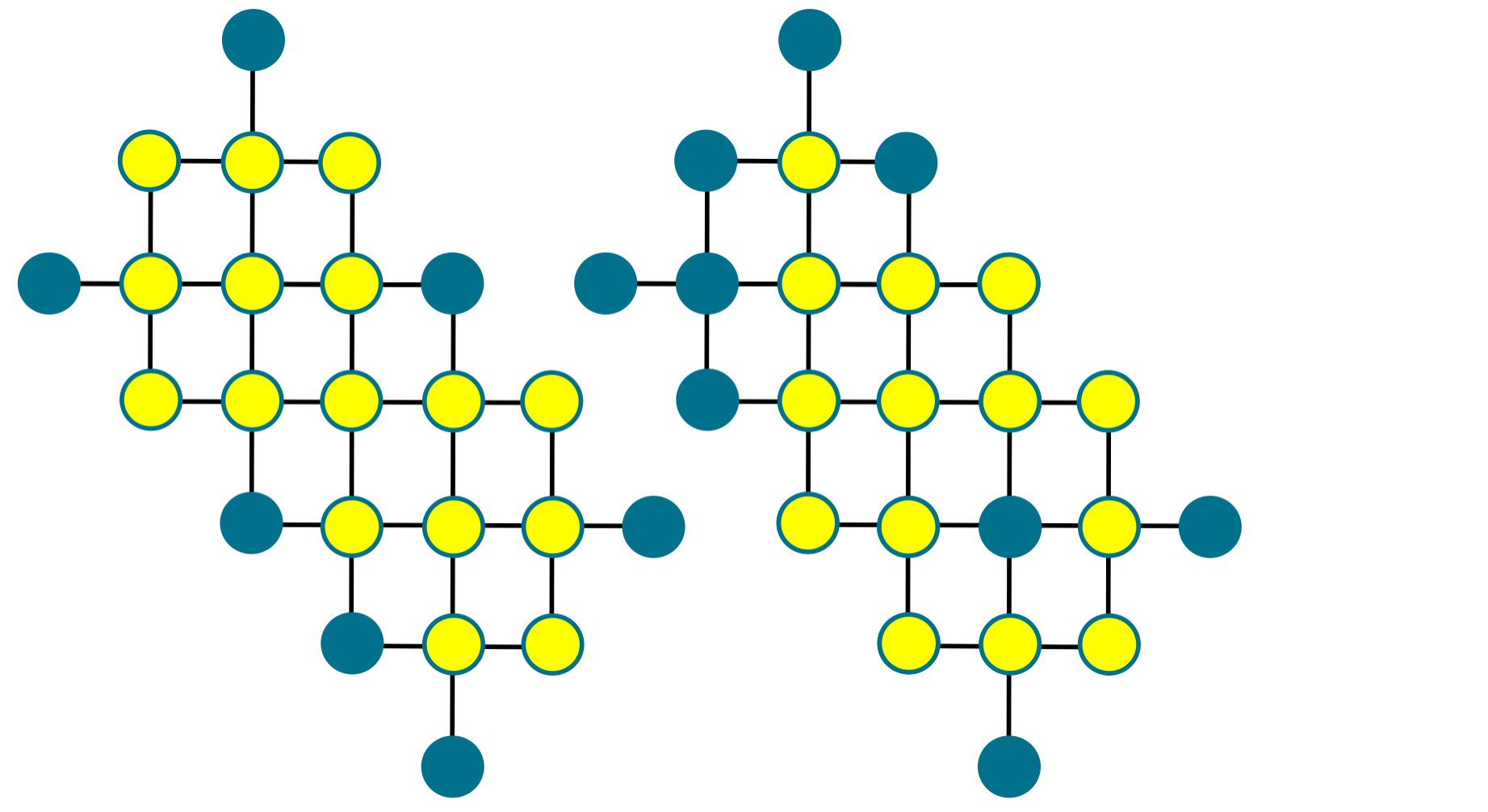}
                                   \end{minipage}\\
\bottomrule
          \end{tabular}}
      \end{table*}

      Finally, the Sycamore architecture by Google is an extreme case as it has a higher connectivity than any other
      superconducting quantum circuit architecture considered here. This is reflected in the large number of \mbox{non-isomorphic}
      subarchitectures contained in the $23$-qubit architecture. Because of this high connectivity, one would expect
      that, according to \autoref{thm:opt-layout}, subarchitectures can be improved by adding additional qubits to
      complete the many $4$-qubit rings present in the Sycamore architecture. And, indeed, it can be observed that, even for $13$ qubits, the optimal subarchitecture candidate uses all but one of the architecture's qubits. Furthermore, it is hard to compute a small set of covering architectures.
      This makes the trade-off between coverage of optimal mapping solutions, complexity, and number of subarchitectures that need to be considered
      during quantum circuit mapping tricky. 
      Even for the $13$ qubit case, no covering with less than $97$ elements can be found that does not contain a subarchitecture with more
      than $18$ qubits. However, this is still a significant reduction from the $1153$ desirable subarchitectures for $13$
      qubits.

      Overall, \autoref{tab:subarchitectures} demonstrates just how hard it can be to reduce the size of
      subarchitectures considered for qubit allocation. This fact makes the theoretical findings provided in this work and the resulting tool
      especially appealing because they aid in understanding and tackling this difficult problem and provide a solid
      foundation for handling the enormous search space for future quantum circuit mapping techniques.

      \subsection{Computing the Partial Order $\prec$}
      \label{sec:runtime_arch}
      \begin{table}
          \sisetup{round-mode=places, round-precision=3}
        \caption{Construction of the partial order $\prec$ for architectures of different sizes}\label{tab:construction}
        \begin{tabular}[t]{l r r r r r}
    \toprule
    Architecture & $|A|$ & Connected & Non-isomorphic & $t$~[\si{s}]\\ \midrule
          \csvreader[head to column names]{lib_gen.csv}{}{\\\Architecture&\nqubits&\nonisomorphic&\isomorphic&\num{\t}}\\
                                                                                                               rigetti-4-ring & 32 & - &- & \emph{timeout}
    \\\bottomrule
  \end{tabular}
\end{table}

      To showcase the runtime scaling of computing optimal subarchitectures, consider \autoref{tab:construction}.
      It illustrates the runtime with respect to the size of a given architecture and its number of different subarchitectures (isomorphic as well as non-isomorphic) when constructing the initial ordering $\prec$ of the subarchitectures---the computationally most expensive part of \autoref{alg:covering}.
      The considered architectures are comprised of different IBM quantum computers as well as versions of the Rigetti and Google architectures considered in the previous section.
      The Rigetti and Google architectures are comprised of 8- and 4-qubit rings, respectively, which are connected on a regular grid.
      For these architectures the number of connected rings for which $\prec$ has been computed, were varied to show the runtime dependency.

      Architectures with more than 30 qubits could not be evaluated within 24 hours using the current version of the tool. This is mainly due to the current implementation not taking advantage of all available hardware resources, e.g., via parallelization. While there is naturally still a limit to the size of the considered architectures, a more optimized implementation can help analyze larger and more intricate architectures.
      The runtimes also suggest an optimization when mapping smaller circuits to regularly repeating architectures: instead of computing $\prec$ on the entire architecture, one can pick a smaller subarchitecture that is likely to contain most of the relevant subarchitectures for the circuit in question.
      This way, one can hope to gain a significant speedup whithout losing out on too many subarchitectures, as the runtime increases more drastically with the size of the architecture than the number of non-isomorphic subarchitectures.

      \subsection{Impact on Quantum Circuit Mapping}
      \label{sec:runtime_mapping}
      \begin{table*}
          \sisetup{round-mode=places, round-precision=3}
  \caption{Optimal mapping with different subarchitectures}\label{tab:mapping}
\vspace*{-2mm}
    \begin{subtable}[t]{0.48\linewidth}\centering
    \caption{4-qubit circuits on 6-qubit architecture}
    \resizebox{\linewidth}{!}{\small \begin{tabular}[t]{p{0.4\linewidth}>{\raggedleft}p{0.3\linewidth}>{\raggedleft\arraybackslash}p{0.3\linewidth}}
    \toprule
          Subarchitecture & \#SWAP & $t$~[\si{s}]\\\midrule
                                       \multicolumn{3}{c}{Two-local-random}
          \csvreader[head to column names]{twolocalrandom.csv}{}{\\\arch&\swaps&\num{\t}}\\\midrule
                                       \multicolumn{3}{c}{Portfolio VQE}
          \csvreader[head to column names]{portfoliovqe.csv}{}{\\\arch&\swaps&\num{\t}}\\\midrule
                                       \multicolumn{3}{c}{QFT Entangled}
          \csvreader[head to column names]{qftentangled.csv}{}{\\\arch&\swaps&\num{\t}}\\\midrule
                                       \multicolumn{3}{c}{Realamprandom}
          \csvreader[head to column names]{realamprandom.csv}{}{\\\arch&\swaps&\num{\t}}\\
\bottomrule
  \end{tabular}}
  \end{subtable}\hspace*{0.01\linewidth}
    \begin{subtable}[t]{0.48\linewidth}\centering
    \caption{5-qubit circuits on 7-qubit architecture}
    \resizebox{\linewidth}{!}{\small \begin{tabular}[t]{p{0.4\linewidth}>{\raggedleft}p{0.3\linewidth}>{\raggedleft\arraybackslash}p{0.3\linewidth}}
    \toprule
          Subarchitecture & \#SWAP & $t$~[\si{s}]\\ \midrule
          \multicolumn{3}{c}{Graphstate}
          \csvreader[head to column names]{graphstate_5.csv}{}{\\\arch&\swaps&\num{\t}}\\\midrule
          \multicolumn{3}{c}{AE}
          \csvreader[head to column names]{ae_5.csv}{}{\\\arch&\swaps&\num{\t}}\\\midrule
          \multicolumn{3}{c}{QPE Exact}
          \csvreader[head to column names]{qpeexact_5.csv}{}{\\\arch&\swaps&\num{\t}}\\\midrule
          \multicolumn{3}{c}{QGAN}
          \csvreader[head to column names]{qgan_5.csv}{}{\\\arch&\swaps&\num{\t}}\\ \bottomrule
  \end{tabular}}
\end{subtable}
\vspace*{-5mm}
\end{table*}
      In order to illustrate the potential impact of considering optimal subarchitectures for quantum circuit mapping, \autoref{tab:mapping} shows the effects of subarchitectures on the runtime and number of SWAPs when mapping quantum circuits.
      To this end, the exact mapper available in QMAP~(\mbox{\url{https://github.com/cda-tum/qmap}})~\cite{willeMappingQuantumCircuits2019} was used for exact mapping, and the benchmarks were taken from the benchmark library MQTBench (version 0.2.2)~\cite{quetschlichMQTBenchBenchmarking2022}.

      As architectures, a the 6- and 7-qubit architectures in \autoref{fig:benchmark_rings} were considered. Both are comprised of a central ring of qubits with one additional qubit connected to one of the qubit rings. This way, the number of different non-isomorphic subarchitectures is kept small while seeing the effect of having shorter connections, different subarchitectures, and covering subarchitectures. There are only 5 subarchitectures to consider: the central ring (\emph{ring}), the 4- (5-) qubit line (\emph{line}), the 3- (4-) qubit line with an additional qubit connected to one of the middle qubits (\emph{fork}), the subarchitecture containing both \emph{line} and \emph{fork} (\emph{cover}), and the entire architecture itself (\emph{full}).
      The optimal subarchitecture for 4- and 5-qubit circuits on both architectures is the entire architecture itself.
      However, from \autoref{tab:mapping}, we can see that mapping to the entire architecture is often quite costly in terms of runtime compared to mapping to subarchitectures.
      A 2-subarchitecture covering can be obtained by taking the \emph{ring} and \emph{cover} subarchitectures together. This often yields much better runtimes without losing the optimal solution because it is guaranteed to be contained in these two subarchitectures by \autoref{thm:opt-informal}.

            \begin{figure}[t]
        \centering
        \begin{subfigure}[b]{0.24\linewidth}
          \includegraphics[width=\linewidth]{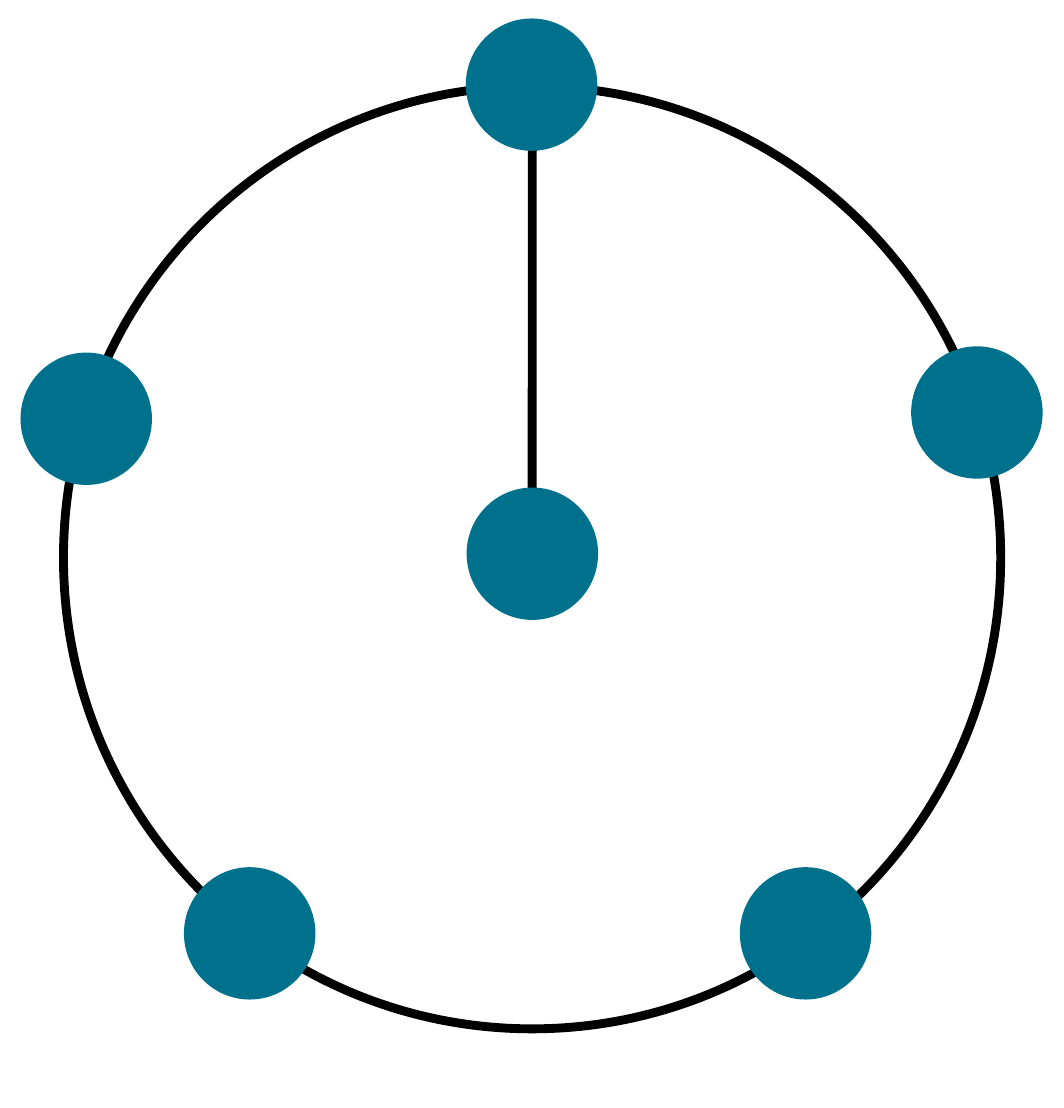}
          \end{subfigure}\hspace*{.2\linewidth}
        \begin{subfigure}[b]{0.24\linewidth}
          \includegraphics[width=\linewidth]{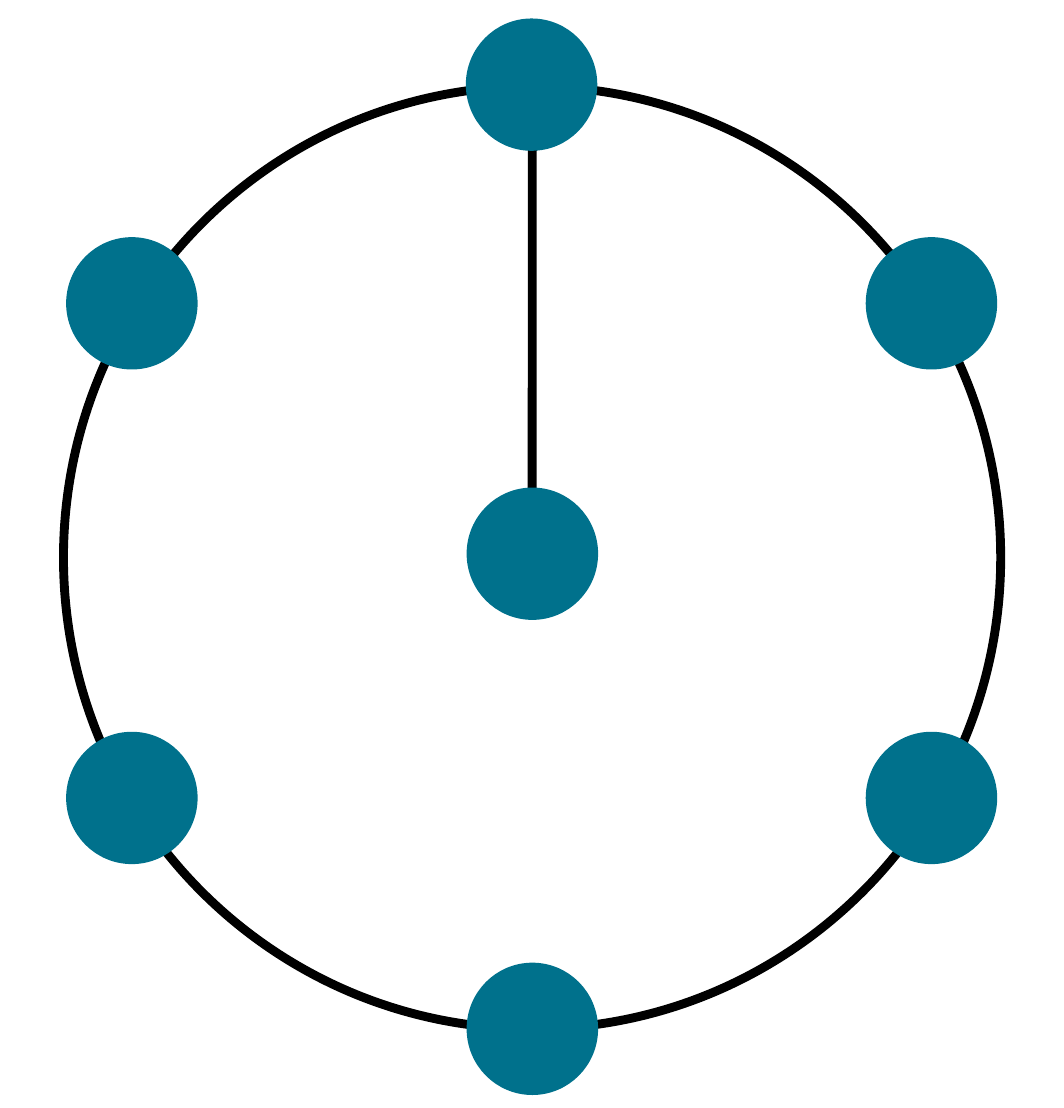}
        \end{subfigure}
        \caption{Architectures for optimal quantum circuit mapping}
        \label{fig:benchmark_rings}
      \end{figure}
      
      This table also shows the validity of \autoref{thm:opt-layout}, as the additional qubit completing the ring actually helps to improve the quality of the mapping. According to \autoref{thm:iso-opt}, there are circuits that are cheaper to map to \emph{cover} than to the two smaller architectures \emph{line} and \emph{fork}. This theoretical property was not observed in the experiments as the considered real-world quantum circuits where way too shallow and structured to possibly benefit from being mapped to \emph{cover}.

        \section{Conclusion and Future Work}\label{sec:conclusion}

        In this work, we introduced the notion of optimal subarchitectures for mapping $n$-qubit quantum circuits
        and disproved a previous conjecture that all $n$-qubit quantum circuits can be mapped to some $n$-qubit
        subarchitecture of a quantum computing device without potentially eliminating optimal mapping solutions. In
        fact, quite the opposite is the case: trying to reduce the number of qubits considered in the qubit allocation
        process without cutting off essential parts of the search space in the subsequent
        mapping is incredibly difficult. Despite this theoretical result, the structure of the quantum circuits that require such large optimal
        subarchitectures is pretty artificial, and the conditions for optimality can be relaxed a bit. Hence, we
        introduced an algorithm for computing subarchitectures that constitutes a trade-off between coverage of
        optimality and architecture complexity. The resulting tool is integrated into the open-source tool QMAP (available at \url{https://github.com/cda-tum/qmap}), which is part of the \emph{Munich Quantum Toolkit} (MQT).

        Based on this first method for computing near-optimal subarchitectures, there are several possible directions
        for improvement:

        \begin{itemize}
        \item Quantum circuits used in real-world applications naturally possess a lot of structure. It
        might, therefore, be possible to compute optimal subarchitectures for certain \emph{classes} of quantum circuits
        instead of any quantum circuit of a given size.  These classes can probably be
        deduced from two-qubit interaction patterns repeatedly found in real-world quantum circuits.
        \item         In this work, no distinction was made between isomorphic subarchitectures of a quantum computing device.
        In reality, neither the qubits nor the connections between them on an architecture are equally reliable.
        Consequently, placing a certain subarchitecture on a different part of the whole architecture might yield more reliable circuit executions.
        Tools like \emph{mapomatic}~\cite{nationMapomatic} exist that search for low-noise subarchitectures given an already-compiled quantum circuit.
        Such information could additionally be included in the methodology for determining suitable subarchitectures to provide for quantum compilers that consider
        noise.
      \item         In order to compute (near-)optimal subarchitectures of future large-scale quantum computers, more
        efficient algorithms than the one initially presented here will have to be developed. A promising approach is to
        take advantage of the highly symmetric structure of real-world quantum computing architectures. 
          
        \end{itemize}

        To summarize, this work has laid the groundwork for further research into the problem of computing (near-)optimal subarchitectures of state-of-the-art quantum computers. 
        This first work is a good starting point for more methods and improvements that can help develop quantum circuit mappers that can handle the mapping problem for large-scale quantum computers in an efficient way. 

\subsection*{Acknowledgements}
This work received funding from the European Research Council (ERC) under the European Union’s Horizon 2020 research and innovation program (grant agreement No. 101001318), was part of the Munich Quantum Valley, which is supported by the Bavarian state government with funds from the Hightech Agenda Bayern Plus, and has been supported by the BMWK on the basis of a decision by the German Bundestag through project QuaST, as well as by the BMK, BMDW, and the State of Upper Austria in the frame of the COMET program (managed by the FFG).

        \printbibliography

\end{document}